\setlist[enumerate]{leftmargin=.5in}
\setlist[itemize]{leftmargin=.5in}
\crefname{hypothesis}{Hypothesis}{Hypotheses}
\title{The Modulo Radon Transform: Theory, Algorithms and Applications\thanks{Our preliminary, exploratory ideas have been presented in engineering conference proceedings \cite{Bhandari2020, Beckmann2020}.
\funding{This work was funded by the UK Research and Innovation council's FLF program \emph{Sensing Beyond Barriers} (MRC Fellowship award no.~MR/S034897/1).
}}}
\author{Matthias Beckmann\thanks{Center for Industrial Mathematics, University of Bremen, Germany and Department of Mathematics, University of Hamburg, Germany 
  (\email{matthias.beckmann@uni-hamburg.de}). Between March--December 2020, Dr.~Beckmann was a visiting researcher at the Department of Electrical and Electronic Engineering, Imperial College London, UK.}
\and Ayush Bhandari\thanks{Department of Electrical and Electronic Engineering, Imperial College London, UK 
  (\email{ayush@alum.mit.edu}).}
\and Felix Krahmer\thanks{Department of Mathematics, Technical University of Munich, Germany 
  (\email{felix.krahmer@tum.de}).}}
\def\N			{\mathbb N}
\def\Z			{\mathbb Z}
\def\R			{\mathbb R}
\def\Sphere		{\mathbb{S}^1}
\def\Radon		{\mathcal R}
\def\Fourier	{\mathcal F}
\def\Int		{\mathcal I}
\def\Mod		{\mathscr M}
\def\Modulo		{\mathscr R}
\def\Sum		{\mathsf S}
\def\ind		{\mathbbm 1}
\def\Cont		{\mathscr C}
\def\Lebesgue	{\mathrm L}
\def\Sobolev	{\mathrm H}
\def\Band		{\mathcal B}
\def\PW			{\mathsf{PW}}
\def\Bernstein	{\mathscr B}
\def\d			{\mathrm d}
\def\e			{\mathrm e}
\def\T			{\mathrm T}
\def\FBP		{\mathrm{FBP}}
\def\bfx		{\mathbf{x}}
\def\bftheta	{\boldsymbol{\theta}}
\newcommand     {\etal}{{et al.\ }}
\newcommand     {\sqb}[1]{[ #1 ]}
\DeclareMathOperator{\sinc}{sinc}
\DeclareMathOperator{\supp}{supp}
\begin{document}

\maketitle

\begin{abstract}
Recently, experiments have been reported where researchers were able to perform high dynamic range (HDR) tomography in a heuristic fashion, by fusing multiple tomographic projections. This approach to HDR tomography has been inspired by HDR photography and inherits the same disadvantages.
Taking a computational imaging approach to the HDR tomography problem, we here suggest a new model based on the Modulo Radon Transform (MRT), which we rigorously introduce and analyze. By harnessing a joint design between hardware and algorithms, we present a single-shot HDR tomography approach, which to our knowledge, is the only approach that is backed by mathematical guarantees. 
On the hardware front, instead of recording the Radon Transform projections that my potentially saturate, we propose to measure modulo values of the same. This ensures that the HDR measurements are folded into a lower dynamic range. On the algorithmic front, our recovery algorithms reconstruct the HDR images from folded measurements. Beyond mathematical aspects such as injectivity and inversion of the MRT for different scenarios including band-limited and approximately compactly supported images, we also provide a first proof-of-concept demonstration. To do so, we implement MRT by experimentally folding tomographic measurements available as an open source data set using our custom designed modulo hardware. Our reconstruction clearly shows the advantages of our approach for experimental data. In this way, our MRT based solution paves a path for HDR acquisition in a number of related imaging problems.
\end{abstract}

\begin{keywords}
Radon transform, high dynamic range, image processing, inverse problem, image reconstruction, modulo non-linearity, X-ray tomography, computational imaging.
\end{keywords}

\begin{AMS}
44A12, 94A08, 94A20
\end{AMS}

\tableofcontents


\section{Introduction}

\subsection{A Brief History of Radon Transform}

Cutting-edge advances in mathematics, hardware systems and computational resources have culminated into consumer-grade computerized tomography (CT)---a technology that has undoubtedly revolutionized medical imaging among other applications. At the conceptual core of all CT systems is the Radon Transform that links their measurements to their unknown imaging parameters. The roots of this description trace back to the early work of Minkowski \cite{Minkowski1904}, who studied the recovery of functions from the knowledge of its line integrals over big circles on a sphere. Thereon, Funk \cite{Funk1915} extended Minkowski's setup for recovery on the sphere and Radon \cite{Radon1917} developed the first solution in the context of Euclidean spaces. In the modern science and engineering landscape, the widespread implications of Radon's work are distinctly conspicuous---the Radon Transform is where mathematicians, computer scientists, signal processing experts, radiologists and hardware designers converge. 

Beyond its instrumental role in non-invasive, macroscopic imaging of the human body, the broad umbrella of methods encompassed by Radon Transform tomography \cite{Deans2007} also serves as a key enabler for a number of scientific imaging applications. Concrete examples include non-destructive testing \cite{Holler2017}, imaging of nanoscale features \cite{Dierolf2010} and biological inference \cite{Cloutier2020}. Furthermore, the sheer insights developed in designing reconstruction algorithms have catalyzed all together new imaging techniques such as non-line-of-sight or ``looking around the corners'' imaging \cite{Velten2012,Kadambi2016}. 

In the last decades, owing to its interdisciplinary implications, the Radon Transform has been studied in several contexts. In the area of applied mathematics, Tretiak and Metz \cite{Tretiak1980} have proposed the ``Exponential Radon Transform'' which generalizes the conventional Radon Transform incorporating exponential weight functions. In the context of signal processing and sampling theory, Beylkin \cite{Beylkin1987} studied the discrete Radon Transform (DRT) in spirit similar to the well known discrete Fourier Transform. Applications with reference to geophysical explorations and remote sensing have been investigated by Chapman in \cite{Chapman1981}. These advances have necessitated the development of fast and robust reconstruction algorithms \cite{Zhu2018,Miqueles2018} tailored to specific application  areas.  

The interdisciplinary applications of the Radon Transform and the cross-fertilization of the know-how have motivated a number of advanced reconstruction algorithms that have bene\-fited the field. This has been the case especially in the last decade, and now there is an increased activity towards development of reconstruction algorithms inspired by machine learning and artificial intelligence \cite{McCann2017}. In comparison, the pace of hardware development has been relatively slow. Rather, with computational resources getting cheaper and more accessible, most works have focused on new algorithms without modifying the hardware. A concrete example is that of deep learning \cite{Sahiner2018,Wang2020}. For instance, learning algorithms can be trained on a large dataset based on the standardized hardware pipeline, and the trained network is then used to infer or reconstruct information from new measurements (tomograms). While promising, all such approaches heavily rely on the quality of the hardware data. Any loss of information in the hardware pipeline will degrade the image reconstruction quality. In particular, a fundamental bottleneck is that of the dynamic range. All physical sensors operate with a fixed dynamic range and this poses a limitation on what can be measured. Information defined by signal voltages, amplitudes or intensities that are higher than the dynamic range of the sensor results in sensor saturation and this in turn yields a permanent loss of information. Handling such an information loss entirely in the reconstruction stage requires significant redundancy, which motivates us to consider alternatives based on modified hardware in this paper.

\subsection{Towards High-Dynamic-Range Tomography}
\label{sec:approach}

\subsubsection*{Limitations of Conventional, Single-Exposure Based Radon Projections} 
When working with heterogeneous materials where the tissue density varies considerably, standard CT methods typically have significant limitations. This is because source-detector calibration is critical for optimal registration of the Radon Transform projections. In particular, for X-ray radiation, the energy is parametrized by the tube voltage, which, in turn, is fixed by design, and is hence difficult to adapt to the tissue density.
As a consequence, for a fixed tissue density or thickness, if the source energy is too high, the registered projections will saturate the sensor resulting in a permanent loss of information. On the flip side, significantly lower source energy may result in inference problems due to loss of details \cite{Berkhout2004}. In either case, when the exposure is sub-optimal, repeated imaging has to be performed \cite{Haidekker2017}. In implementing practical systems with fixed source energy, there is an inherent trade-off between (a) source energy, (b) tissue density or thickness and (c) the dynamic range of the detector. When the optimal imaging conditions are not met, the recovered images exhibit undesirable artifacts. With implementation of digital X-ray systems, sensor saturation is even more pronounced because the dynamic range of digital systems is known to be lower than that of classical film-based systems \cite{Berkhout2004}. This is attributed to the power-law characteristics of the film \cite{Bushberg2012}. 
To tackle this problem, several studies have focused efforts on optimal exposure selection \cite{Ching2014}.

\subsubsection*{Recent Advances: Multiple Exposure, High-Dynamic-Range Tomography} The restrictions imposed by the detector's fixed dynamic range and the resulting difficulties in calibrating the optimal exposure have motivated new methods for tomography. Recent efforts have catalyzed what is known as \emph{high dynamic range} (HDR) tomography where the end goal is to recover tomographic information that is orders of magnitude larger than the detector's maximum range. The core idea behind such methods is inspired by consumer-grade HDR photography \cite{Debevec1997}. In crux, multiple low dynamic range measurements at different energy/exposure levels are algorithmically combined into an HDR image. The first ideas based on fusion of a pair of dental X-ray images was proposed by Trpovski \etal \cite{Trpovski2013}. A multiple exposure approach for X-ray imaging was reported by Chen \etal in \cite{Chen2015} and by Haidekker \etal in \cite{Haidekker2017}. For a concrete HDR X-ray that is based on a cat forelimb, we refer to Fig.~6 in \cite{Haidekker2017}. In the work of Weiss \etal \cite{Weiss2017}, the authors presented a pixel-level design for an HDR X-ray imaging setup. When using multi-exposure X-ray imaging, one still needs to calibrate the source energy for each exposure. To circumvent this calibration, Li \etal discuss an automated approach in \cite{Li2018}. Beyond the case of a single projection angle, recently, Chen \etal employed exposure adaption between different scanning angles in \cite{Chen2020}. 

Clearly, the approaches reviewed so far \cite{Chen2015,Haidekker2017,Li2018,Chen2020} take X-ray imaging into a new, largely unexplored direction. That said, since the HDR X-ray imaging is based on multiple exposure HDR photography, the former also inherits the limitations of HDR photography. These limitations include, 
\begin{enumerate}
\item Qualitative Aspects.

\begin{enumerate}
\item Ghosting artifacts. For exposure fusion to work \cite{Debevec1997,Chen2015,Haidekker2017}, it is necessary that all the image frames are aligned or registered. When working with multiple exposures, any movement between the exposures will result in images to be mis-aligned, thus yielding ghosting artifacts. As pointed out in the work of Haidekker \etal in \cite{Haidekker2017}, exposure times in the range of tens of seconds are required for HDR X-ray imaging and this is likely to cause motion artifacts. 

\item Exposure calibration. HDR X-rays rely on multiple exposures that should be calibrated  \cite{Li2018,Chen2020} so that the eventual reconstruction indeed reveals HDR features. 

\item Tone mapping and sensor response. For converting several exposures into a single HDR image, the knowledge of sensor response is of significant importance as this affects the reconstruction quality. In commercial systems, this aspect may be unknown \cite{Haidekker2017} due to manufacturer's policies. 
\end{enumerate}

\item Quantitative Aspects. The methods discussed in the literature are largely based on empirical experiments. A principled approach that can yield mathematically guaranteed reconstruction for HDR X-ray imaging or shed light on the number of images that are required for desirable HDR reconstruction is not available to date.  
\end{enumerate}

\begin{figure}[!t]
\centering
\includegraphics[width = 0.75\textwidth]{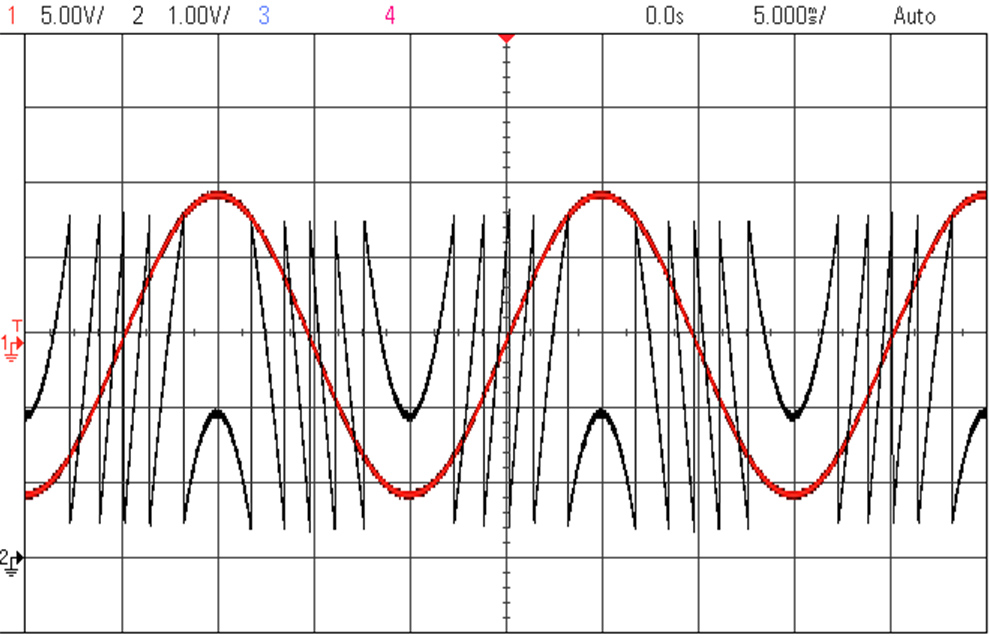}
\caption{Hardware demonstration for \emph{unlimited sampling} approach. We show an oscilloscope screenshot plotting a continuous-time function (ground truth, red) and its folded version (black). The input signal with dynamic range $20 \mathrm{V}$ peak-to-peak ($\approx 10\lambda$) is folded into a $4.025 \rm{V}$ peak-to-peak signal. Note that the ground truth (red) is plotted such that each vertical division is $5$ times larger than its corresponding value for the folded signal~(black). A live {YouTube} demonstration of this experiment is available at \href{https://youtu.be/JuZg80gUr8M}{\emph{https://youtu.be/JuZg80gUr8M}}.}
\label{fig:MTCS}
\end{figure}

\subsubsection*{Our Approach}
Motivated by the challenges and bottlenecks pivoted around HDR tomo\-graphy, in this paper, we propose a conceptually different approach which aims to overcome the aforementioned limitations. In particular, we propose a single-shot HDR imaging approach that is backed by mathematical guarantees. At the core of this new proposal is the \textbf{Modulo Radon Transform} or, in short, the MRT. 
In its implementation, the MRT is analogous to the conventional Radon Transform in that, at each angle, one computes the line integral in the Euclidean space. The unconventional aspect of the MRT is that instead of encoding pointwise Radon projections that may potentially saturate, the MRT incorporates a modulo operation that performs a reset before the saturation level is reached (see Figure~\ref{fig:MTCS} for the results of a hardware demonstration). This allows to locally capture changes of the signal values, hence circumventing sensor saturation or clipping as they appear in the multi-exposure methods analyzed in \cite{Chen2015,Weiss2017,Li2018}. At the same time, the number of resets performed is not part of the acquired data, so one encounters a different information loss.

In this paper, we discuss the resulting inverse problem and show that accurate reconstruction of the underlying tomogram is feasible under the assumption that the output of the Radon transform is band-limited or low-pass filtered.
This finite bandwidth assumption is naturally motivated by hardware implementation. Namely, for digital X-ray technology, the use of anti-aliasing filters \cite{Yaffe1997,Stark1981} is a standard choice when it comes to digital registration of a radiograph. This is to prevent aliasing of spatial information during the sampling process.
When the Radon transform is used as a forward model for imaging problems beyond tomography, the assumption often directly follows from the model. For example, in seismic imaging \cite{Foster1992,Trad2003}, it arises due to the nature of seismic pulses. 
Finally, even though it is typically not stated explicitly, the band-limitedness assumption also implicitly enters in widely used reconstruction approaches. Indeed, when it comes to commercial imaging systems, Filtered Back Projection (FBP) -- an attractive inversion scheme for the Radon Transform due to its inherent robustness to interpolation errors \cite{Stark1981} and the fact that it can be implemented efficiently -- has a low-pass filter inbuilt into its implementation.

\subsection{Scope and Organization of this Paper}

\begin{figure}[t]
\centering
\includegraphics[width = 0.95\textwidth]{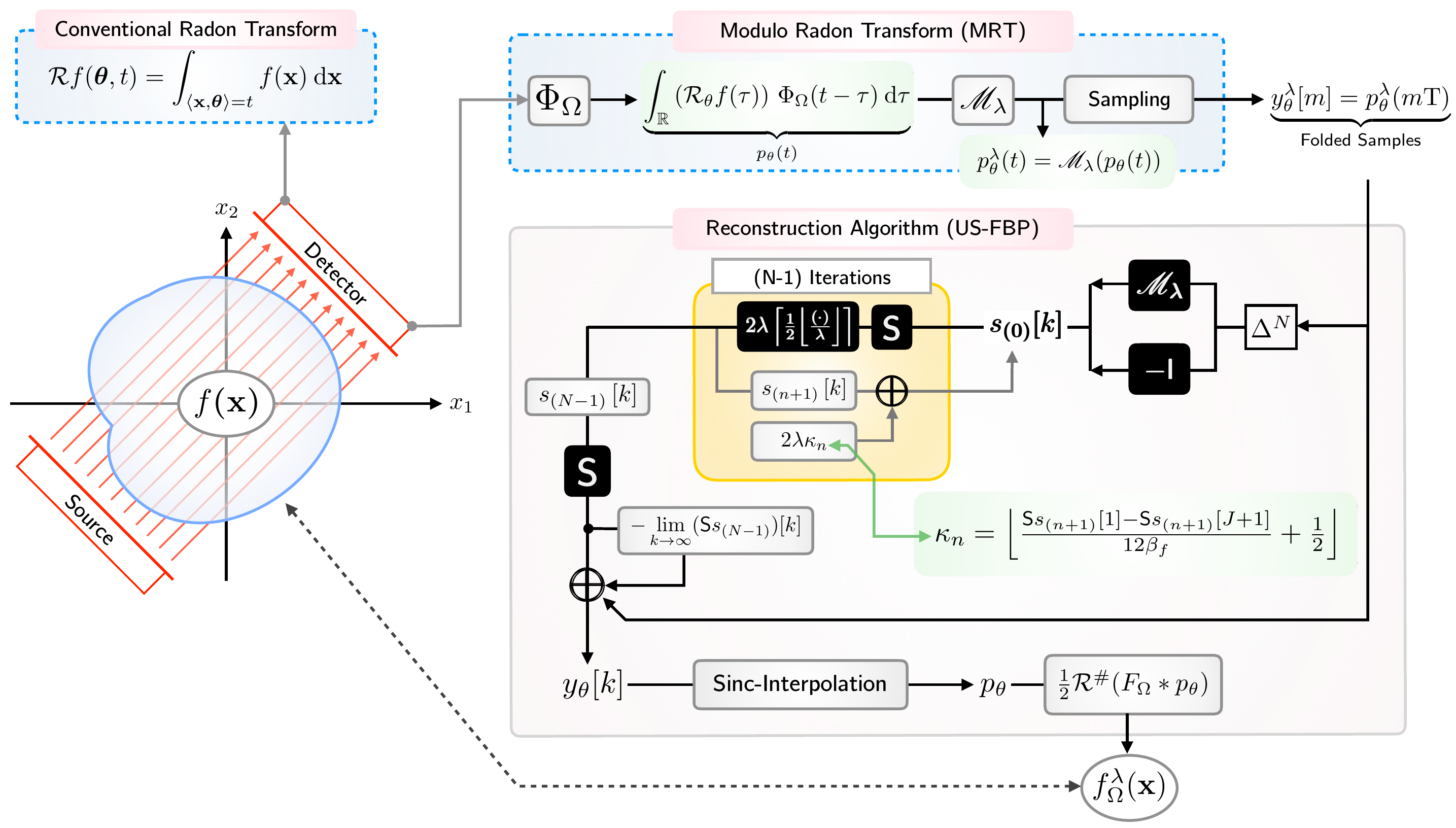}
\caption{US-FBP reconstruction scheme.}
\label{fig:Scheme}
\end{figure}

The outline of this paper is as follows.
In Section~\ref{sec:MRT} we rigorously define the Modulo Radon Transform (MRT) and study some basic properties. In particular, we prove its injectivity in Section~\ref{sec:MRT_injective} and the identifiability from semi-discrete MRT samples in Section~\ref{sec:identifiability} to prepare our discussion of the non-linear inverse problem of reconstructing a function from MRT samples in Section~\ref{sec:reconstruction}.
Our proposed reconstruction procedure for semi-discrete samples (cf.\ Section~\ref{sec:reconstruction_semi-discrete}) is summarized in Algorithm~\ref{algo:US_FBP} and schematically depicted in Figure~\ref{fig:Scheme}. It merges the Filtered Back Projection (FBP) methodology, as discussed above, with  ideas from Unlimited Sampling (US), a framework for high dynamic range sensing first proposed in~\cite{Bhandari2017, Bhandari2020a,Bhandari2020g} in the context of univariate sampling.
We prove exact recovery for band-limited target functions in Theorem~\ref{theorem:US_FBP} and provide error estimates for the case of more general classes of smooth functions in Theorem~\ref{theorem:US_FBP_error}.
Section~\ref{sec:reconstruction_finite} is then devoted to the approximate recovery from finitely many MRT projections.
The key challenge is to both incorporate the a priori information that the image is compactly supported and the implementation detail that the anti-aliasing effect of the sensors, either as a consequence of a filtering operation or due to physical limitations, leads to a smoothing. Given that the resulting functions remain close to the compactly supported Radon projections and hence exceed the modulo threshold of $\lambda$ only in a compact region, we design and analyze our method exactly for functions with compact $\lambda$-exceedance.
This insight allows for a reconstruction algorithm that is significantly simpler than Algorithm~\ref{algo:US_FBP}  and still for a rigorous recovery guarantee as made precise in Theorem~\ref{theo:UL_lambda}. As we derive for a benchmark signal model, our method can also lead to large savings in the number of samples as compared to adjusted existing approaches.
To show the applicability of our approach we finally provide numerical experiments in Section~\ref{sec:numerics}, where we in addition to the classical Shepp-Logan phantom also consider the open-source walnut dataset~\cite{Siltanen2015} and show hardware demonstrations using our custom designed prototype modulo ADC.

\section{The Modulo Radon Transform}
\label{sec:MRT}

Let $f \equiv f(\bfx)$ be a bivariate function with spatial coordinates $\bfx = (x_1,x_2) \in \R^2$.
For fixed $\lambda > 0$, we consider the centered $2\lambda$-modulo operation
\begin{equation*}
\Mod_\lambda(t) = t - 2\lambda \left\lfloor \frac{t + \lambda}{2\lambda} \right\rfloor
\quad \mbox{ for } t \in \R
\end{equation*}
and define the {\em Modulo Radon Transform} (MRT) $\Modulo^\lambda f: \Sphere \times \R \longrightarrow [-\lambda,\lambda]$ of $f$ as
\begin{equation*}
\Modulo^\lambda f(\bftheta,t) = \Mod_\lambda(\Radon f(\bftheta,t)),
\end{equation*}
where $\Radon f$ is the conventional Radon Transform of $f$ given by
\begin{equation*}
\Radon f(\bftheta,t) = \int_{\langle\bfx,\bftheta\rangle = t} f(\bfx) \: \d\bfx.
\end{equation*}
Recalling that any $\bftheta \in \Sphere$ can be uniquely expressed as $\bftheta = (\cos(\theta),\sin(\theta))$ with angle $\theta \in [0,2\pi)$, we will also use the shorthand notation
\begin{equation*}
\Radon_\theta f = \Radon f(\bftheta,\cdot)
\quad \mbox{ and } \quad
\Modulo^\lambda_\theta f = \Modulo^\lambda f(\bftheta,\cdot).
\end{equation*}

\subsection{Well-definedness and Mapping Properties}

Before discussing the inverse problem of recovering a function $f$ from samples of its Modulo Radon Transform $\Modulo^\lambda f$, we first prove some basic properties of the newly proposed operator $\Modulo^\lambda$.
We start with showing that $\Modulo^\lambda$ is well-defined on the space $\Lebesgue^1(\R^2)$ of Lebesgue integrable functions.

\begin{proposition}
\label{proposition:well_definedness}
Let $f \in \Lebesgue^1(\R^2)$ and $\lambda > 0$.
Then, for any fixed angle $\theta \in [0,2\pi)$ the univariate function $\Modulo^\lambda_\theta f$ is Lebesgue measurable and well-defined almost everywhere on $\R$. 
\end{proposition}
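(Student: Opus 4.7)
The plan is to reduce the statement to two well-known facts: (i) for $f\in\Lebesgue^1(\R^2)$ the Radon projection $\Radon_\theta f$ is well-defined almost everywhere on $\R$ and measurable, and (ii) the centered modulo operation $\Mod_\lambda$ is a Borel measurable function $\R\to[-\lambda,\lambda]$. Composing (i) with (ii) then immediately yields the proposition.

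First, I would establish (i) by a change of coordinates combined with Fubini's theorem. For fixed $\theta$, choose the orthonormal basis $\{\bftheta,\bftheta^\perp\}$ of $\R^2$ and parametrize points $\bfx\in\R^2$ by $(t,s)$ via $\bfx=t\bftheta+s\bftheta^\perp$. Since this change of variables is an isometry, $\tilde f(t,s)=f(t\bftheta+s\bftheta^\perp)$ lies in $\Lebesgue^1(\R^2)$ with the same norm as $f$. By Fubini's theorem, the slice $s\mapsto \tilde f(t,s)$ is in $\Lebesgue^1(\R)$ for almost every $t\in\R$, the function $t\mapsto\int_\R \tilde f(t,s)\,\d s$ is Lebesgue measurable, and satisfies
\begin{equation*}
\int_\R \left| \int_\R \tilde f(t,s)\,\d s \right| \d t \leq \|f\|_{\Lebesgue^1(\R^2)}.
\end{equation*}
Identifying $\int_\R \tilde f(t,s)\,\d s$ with $\Radon_\theta f(t)$, this shows that $\Radon_\theta f$ is defined almost everywhere, is Lebesgue measurable, and in fact belongs to $\Lebesgue^1(\R)$.

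Next, I would verify (ii). Writing $\Mod_\lambda(t)=t-2\lambda\lfloor (t+\lambda)/(2\lambda)\rfloor$, the floor function is right-continuous and piecewise constant with jumps only at integers, so $t\mapsto\lfloor(t+\lambda)/(2\lambda)\rfloor$ is Borel measurable. Hence $\Mod_\lambda$ is a Borel measurable function from $\R$ into $[-\lambda,\lambda]$.

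Finally, combining (i) and (ii): wherever $\Radon_\theta f$ is defined (a cofinite-measure set $E_\theta\subset\R$ of full measure), the composition $\Modulo^\lambda_\theta f(t)=\Mod_\lambda(\Radon_\theta f(t))$ is well-defined and takes values in $[-\lambda,\lambda]$. Since a Borel function composed with a Lebesgue measurable function is Lebesgue measurable, $\Modulo^\lambda_\theta f$ is Lebesgue measurable on $E_\theta$, and (extending arbitrarily on the null set $\R\setminus E_\theta$) on all of $\R$. No serious obstacle arises; the only subtlety is to be explicit about the rotation-plus-Fubini argument underlying well-definedness of the Radon projection, which is a standard step but important to state cleanly before invoking Borel measurability of $\Mod_\lambda$.
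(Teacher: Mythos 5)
Your proposal is correct and follows essentially the same route as the paper: a rotation/isometric change of coordinates, Fubini's theorem to get almost-everywhere well-definedness and measurability of $\Radon_\theta f$, and then composition with the Borel measurable map $\Mod_\lambda$. The paper's rotation $H_\theta$ is exactly your parametrization $\bfx = t\bftheta + s\bftheta^\perp$, so there is no substantive difference.
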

\begin{proof}
Let $\lambda > 0$ and $\theta \in [0,2\pi)$ be fixed.
We define the auxiliary function $H_\theta: \R^2 \longrightarrow \R^2$ as
\begin{equation*}
H_\theta(s,t) = (t \cos(\theta) - s \sin(\theta),t \sin(\theta) + s \cos(\theta))
\quad \mbox{ for } (s,t) \in \R^2.
\end{equation*}
Then, $H_\theta$ is a rotation in $\R^2$ and, thus, measure preserving.
Consequently, $f \in \Lebesgue^1(\R^2)$ implies that the mapping
\begin{equation*}
(s,t) \longmapsto f(H_\theta(s,t))
\end{equation*}
is also integrable on $\R^2$.
Hence, applying Fubini's theorem shows that the partial mapping
\begin{equation*}
s \longmapsto f(H_\theta(s,t))
\end{equation*}
is integrable on $\R$ for almost all $t \in \R$.
In particular, the mapping
\begin{equation*}
t \longmapsto \int_\R f(H(s,t,\theta)) \: \d s = \Radon_\theta f(t)
\end{equation*}
is Lebesgue measurable and well-defined for almost all $t \in \R$.
Since the centered $2\lambda$-modulo operator $\Mod_\lambda: \R \longrightarrow [-\lambda,\lambda]$ is Borel measurable and subsets of sets of Lebesgue measure zero again have Lebesgue measure zero, we can conclude that the function $\Modulo^\lambda_\theta f = \Mod_\lambda \circ \Radon_\theta f$ is also Lebesgue measurable and well-defined almost everywhere on $\R$.
\end{proof}

Note that $\Modulo^\lambda$ inherits the evenness condition
\begin{equation}
\label{eq:MRT_evenness}
\Modulo^\lambda f(-\bftheta,-t) = \Modulo^\lambda f(\bftheta,t)
\quad \mbox{ for almost all } (\bftheta,t) \in \Sphere \times \R
\end{equation}
and by the definition of $\Mod_\lambda$ we have
\begin{equation*}
\Modulo^\lambda f(\bftheta,t) \in [-\lambda,\lambda]
\quad \forall \, (\bftheta,t) \in \Sphere \times \R.
\end{equation*}

We now study some elementary mapping properties of the Modulo Radon Transform including its boundedness in the $\Lebesgue^1$-setting.

\begin{proposition}
\label{proposition:mapping_property}
For any $\lambda > 0$,
\begin{equation*}
\Modulo^\lambda_\theta: \Lebesgue^1(\R^2) \longrightarrow \Lebesgue^1(\R) \cap \Lebesgue^\infty(\R)
\end{equation*}
and
\begin{equation*}
\Modulo^\lambda: \Lebesgue^1(\R^2) \longrightarrow \Lebesgue^1(\Sphere \times \R) \cap \Lebesgue^\infty(\Sphere \times \R)
\end{equation*}
are well-defined bounded non-linear operators.
\end{proposition}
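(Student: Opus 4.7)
The proof combines two pointwise bounds on the modulo operation with the classical $L^1$-mapping property of the Radon transform. The plan is to treat $L^\infty$-boundedness and $L^1$-boundedness separately, using well-definedness and measurability supplied by \cref{proposition:well_definedness}, and to record non-linearity by a small explicit counterexample.

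First, I would observe that by definition $\Mod_\lambda$ takes values in $[-\lambda,\lambda]$, hence
\begin{equation*}
\|\Modulo^\lambda_\theta f\|_{\Lebesgue^\infty(\R)} \le \lambda
\quad \text{and} \quad
\|\Modulo^\lambda f\|_{\Lebesgue^\infty(\Sphere \times \R)} \le \lambda
\end{equation*}
regardless of $f$. For the $L^1$-bound, I would establish the elementary pointwise estimate $|\Mod_\lambda(s)| \le |s|$ for every $s \in \R$: for $|s| \le \lambda$ one has $\Mod_\lambda(s)=s$, while for $|s| > \lambda$ the range condition gives $|\Mod_\lambda(s)| \le \lambda < |s|$. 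Combined with the definition of $\Modulo^\lambda$, this yields
\begin{equation*}
|\Modulo^\lambda f(\bftheta,t)| \le |\Radon f(\bftheta,t)|
\quad \text{for almost all } (\bftheta,t) \in \Sphere \times \R.
\end{equation*}

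Next, I would apply the rotation trick already used in the proof of \cref{proposition:well_definedness}. Because $H_\theta$ is measure preserving, Tonelli's theorem gives
\begin{equation*}
\|\Radon_\theta f\|_{\Lebesgue^1(\R)} \le \int_\R \int_\R |f(H_\theta(s,t))| \, \d s \, \d t = \|f\|_{\Lebesgue^1(\R^2)},
\end{equation*}
so $\|\Modulo^\lambda_\theta f\|_{\Lebesgue^1(\R)} \le \|f\|_{\Lebesgue^1(\R^2)}$. Integrating this estimate against the arc-length measure on $\Sphere$ (total mass $2\pi$) then yields $\|\Modulo^\lambda f\|_{\Lebesgue^1(\Sphere \times \R)} \le 2\pi \|f\|_{\Lebesgue^1(\R^2)}$. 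This, together with the $L^\infty$-bounds above, proves the claimed mapping properties, assuming that the image is in fact a measurable function on $\Sphere \times \R$.

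The joint measurability of $\Modulo^\lambda f$ is the only step requiring a little care beyond Proposition~\ref{proposition:well_definedness}. I would observe that the map $(s,t,\bftheta) \mapsto f(H_\theta(s,t))$ is jointly measurable (as the composition of the continuous map $H_\theta$ with the measurable $f$), apply Fubini to integrate out $s$, and then compose with the Borel-measurable $\Mod_\lambda$; this delivers joint measurability of $\Modulo^\lambda f$ on $\Sphere \times \R$. Finally, non-linearity follows from a single explicit instance: taking any non-trivial $f \ge 0$ with $\Radon f$ exceeding $\lambda$ on a set of positive measure, the identity $\Modulo^\lambda(\alpha f) = \alpha \Modulo^\lambda f$ must fail for some scalar $\alpha$, since the left-hand side is always bounded by $\lambda$ while the right-hand side is unbounded in $\alpha$. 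I do not anticipate a hard step here; the only minor pitfall is the joint-measurability upgrade, which the rotation parameterisation handles cleanly.
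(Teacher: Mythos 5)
Your proposal is correct and follows essentially the same route as the paper: the $\Lebesgue^\infty$-bound from the range of $\Mod_\lambda$, the pointwise estimate $|\Mod_\lambda(s)|\le|s|$ (which is exactly the paper's case split into $|\Radon_\theta f|\le\lambda$ and $|\Radon_\theta f|>\lambda$), the rotation/Fubini argument for $\|\Radon_\theta f\|_{\Lebesgue^1(\R)}\le\|f\|_{\Lebesgue^1(\R^2)}$, and integration over $\Sphere$ for the joint statement. Your added remarks on joint measurability and the explicit scaling counterexample for non-linearity are slightly more detailed than the paper's treatment but do not change the argument.
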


\begin{proof}
Let $f \in \Lebesgue^1(\R^2)$ and $\lambda > 0$.
According to Proposition~\ref{proposition:well_definedness}, for any $\theta \in [0,2\pi)$ the univariate function $\Modulo^\lambda_\theta f$ is Lebesgue measurable and by the definition of $\Mod_\lambda: \R \longrightarrow [-\lambda,\lambda]$ we have
\begin{equation*}
\Modulo^\lambda_\theta f(t) \in [-\lambda,\lambda]
\quad \forall \, t \in \R.
\end{equation*}
This implies $\Modulo^\lambda_\theta f \in \Lebesgue^\infty(\R)$ and $\Modulo^\lambda f \in \Lebesgue^\infty(\Sphere \times \R)$.
Moreover, recall that the definition of $\Radon_\theta f$ gives
\begin{align*}
\|\Radon_\theta f\|_{\Lebesgue^1(\R)} & = \int_\R |\Radon_\theta f(t)| \: \d t = \int_\R \bigg|\int_{\langle\bfx,\bftheta\rangle=t} f(\bfx) \: \d \bfx\bigg| \: \d t \\
& \leq \int_\R \int_\R |f(t \cos(\theta) - s \sin(\theta),t \sin(\theta) + s \cos(\theta))| \: \d s \, \d t = \|f\|_{\Lebesgue^1(\R^2)}
\end{align*}
and, thus, one has $\Radon_\theta f \in \Lebesgue^1(\R)$ for all $\theta \in [0,2\pi)$.
With this, we obtain that
\begin{align*}
\|\Modulo^\lambda_\theta f\|_{\Lebesgue^1(\R)} & = \int_{|\Radon_\theta f| > \lambda} \underbrace{|\Modulo^\lambda_\theta f(t)|}_{\leq \lambda < |\Radon_\theta f|} \: \d t + \int_{|\Radon_\theta f| \leq \lambda} \underbrace{|\Modulo^\lambda_\theta f(t)|}_{=|\Radon_\theta f(t)|} \: \d t 
\leq 
\|f\|_{\Lebesgue^1(\R^2)}
\end{align*}
so that $\Modulo^\lambda_\theta f \in \Lebesgue^1(\R)$ for any $\theta \in [0,2\pi)$ and $\Modulo^\lambda_\theta: \Lebesgue^1(\R^2) \to \Lebesgue^1(\R)$ is bounded.
Analogously, integrating also with respect to $\bftheta \in \Sphere$ gives $\Radon f \in \Lebesgue^1(\Sphere \times \R)$ implying $\Modulo^\lambda f \in \Lebesgue^1(\Sphere \times \R)$ and the boundedness of $\Modulo^\lambda: \Lebesgue^1(\R^2) \to \Lebesgue^1(\Sphere \times \R)$.
Finally, the non-linearity of $\Modulo^\lambda_\theta$ and $\Modulo^\lambda$ follows from the non-linearity of the modulo operator $\Mod_\lambda$.
\end{proof}

\subsection{Injectivity}
\label{sec:MRT_injective}

A necessary condition for the solvability of the non-linear inverse problem of reconstructing a function $f$ from its MRT data
\begin{equation*}
\bigl\{\Modulo^\lambda f(\bftheta,t) \bigm| \bftheta \in \Sphere, ~ t \in \R\bigr\}
\end{equation*}
is the injectivity of the operator $\Modulo^\lambda$ on suitable function spaces.
We now show its injectivity for two classical families of functions:
the class of compactly supported functions with continuous Radon transform, which includes most common tissue models, and
the class of band-limited functions, which models the hardware implementation as explained in Section~\ref{sec:approach}.

\smallskip

In both cases we rely on the following modulo decomposition of a univariate function $\phi \in \Cont(\R)$, see also~\cite[Proposition 1]{Bhandari2017}.

\begin{proposition}[Modulo Decomposition Property]
\label{proposition:modulo_decomposition}
Let $\lambda > 0$.
For any $\phi \in \Cont(\R)$ there exist a countable index set $J$, pairwise disjoint proper intervals $I_j \subset \R$, $j \in J$, with non-empty interior and coefficients $c_j \in \Z$, $j \in J$, such that
\begin{equation}
\label{eq:modulo_decomposition}
\phi = \Mod_\lambda(\phi) + \varepsilon_\phi
\end{equation}
with
\begin{equation*}
\varepsilon_\phi = 2\lambda \left\lfloor \frac{\phi + \lambda}{2\lambda} \right\rfloor = 2\lambda \sum_{j \in J} c_j \, \ind_{I_j}.
\end{equation*}
\end{proposition}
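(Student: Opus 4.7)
The first equality $\phi = \Mod_\lambda(\phi) + \varepsilon_\phi$ is an immediate pointwise consequence of the definition of $\Mod_\lambda$: substituting $t = \phi(t')$ in $\Mod_\lambda(t) = t - 2\lambda \lfloor (t+\lambda)/(2\lambda) \rfloor$ gives the identity with $\varepsilon_\phi = 2\lambda \lfloor (\phi+\lambda)/(2\lambda) \rfloor$. The task therefore reduces to exhibiting the integer-valued function $g := \lfloor (\phi+\lambda)/(2\lambda) \rfloor$ as a countable sum $\sum_{j\in J} c_j \ind_{I_j}$ over pairwise disjoint intervals $I_j \subset \R$ with non-empty interior and coefficients $c_j \in \Z$.

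The structural observation driving the argument is that $g(t) = k$ exactly when $\phi(t) \in [(2k-1)\lambda, (2k+1)\lambda)$, so $g$ can only change value at points where $\phi$ crosses a level in $(2\Z+1)\lambda$. I therefore introduce the transition set $D := \phi^{-1}\bigl((2\Z+1)\lambda\bigr)$, which is closed as the preimage of a discrete closed subset of $\R$ under the continuous map $\phi$. On the open complement $U := \R \setminus D$, combining continuity of $\phi$ with the intermediate value theorem shows that $g$ is locally constant: if $t_0 \in U$, then $\phi(t_0) \in ((2k-1)\lambda, (2k+1)\lambda)$ strictly for some $k$, and continuity keeps $\phi$ in this open range throughout a whole neighborhood of $t_0$, forcing $g \equiv k$ there. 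Invoking the standard structure theorem for open subsets of $\R$, I decompose $U = \bigsqcup_{j \in J_0} I_j$ into countably many pairwise disjoint open intervals, on each of which $g$ takes a constant value $c_j \in \Z$.

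The main obstacle I anticipate is handling the transition set $D$ itself. At each $t^\ast \in D$ with $\phi(t^\ast) = (2k+1)\lambda$ one computes $g(t^\ast) = k+1$, which matches the value of $g$ on the adjacent component of $U$ on which $\phi > (2k+1)\lambda$ but differs by one from the value on the opposite side. My plan is to absorb each such boundary transition point into the neighbouring interval on the upper side, upgrading the open $I_j$'s to half-open intervals while preserving both disjointness and the local constancy of $g$; connected pieces of $D$ with non-empty interior, which arise only if $\phi$ is locally constant at a transition level, are then added to $J$ as separate intervals with the matching $c_j$. A residual pathology would occur for points of $D$ that are topologically isolated from $U$ yet cannot themselves be grouped into non-trivial intervals, e.g., if $\phi^{-1}((2k+1)\lambda)$ contains a nowhere-dense perfect set of local extrema; such a set has Lebesgue measure zero, so the identity $\varepsilon_\phi = 2\lambda \sum_{j\in J} c_j \ind_{I_j}$ is to be read in the almost-everywhere sense appropriate to the $\Lebesgue^1$/$\Lebesgue^\infty$ framework of the paper.
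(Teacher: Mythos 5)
Your strategy coincides with the paper's: both proofs obtain the identity $\phi = \Mod_\lambda(\phi)+\varepsilon_\phi$ directly from the definition, observe that $g := \varepsilon_\phi/(2\lambda)$ is integer-valued, and then try to decompose $\R$ into intervals on which $g$ is constant. You execute the local-constancy step more carefully than the paper does: your characterization $g(t)=k \Leftrightarrow \phi(t)\in[(2k-1)\lambda,(2k+1)\lambda)$ is the correct one (the paper's proof works with preimages of the intervals $(2\lambda k, 2\lambda(k+1))$, on which $g$ need not be constant), and you correctly isolate the transition set $D=\phi^{-1}((2\Z+1)\lambda)$ as the only place where anything can go wrong -- a point the paper's proof skips entirely with ``consequently, we can construct a family $\{I_j\}$''.

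Nevertheless, your handling of $D$ has two concrete gaps. First, absorbing each transition point ``into the neighbouring interval on the upper side'' presupposes that such a neighbour exists; it does not when $\phi$ touches a transition level tangentially from below. For $\phi(t)=\lambda-t^2$ one has $g(0)=1$ and $g\equiv 0$ on a punctured neighbourhood of $0$, so locally $\varepsilon_\phi=2\lambda\,\ind_{\{0\}}$; since for pairwise disjoint intervals the sum $\sum_{j} c_j\,\ind_{I_j}$ is constant on each $I_j$ and each $I_j$ has non-empty interior, no such sum can agree with $\ind_{\{0\}}$ near $0$. This already shows the proposition is false as literally stated, so no absorption scheme can close the gap. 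Second, your retreat to an almost-everywhere identity relies on ``nowhere dense implies null'', which is false: taking $C$ a fat Cantor set and $\phi=\lambda-\min(\mathrm{dist}(\cdot,C),\lambda/2)$ gives $\varepsilon_\phi=2\lambda\,\ind_{C}$ with $C$ of positive measure and empty interior, and the same disjointness argument shows that every admissible sum differs from it on a set of positive measure, so even the a.e.\ version fails. In fairness, the paper's own proof contains the identical hole, and nothing downstream is affected: every later use of this proposition (the injectivity arguments, the semi-discrete identifiability theorem, and the recovery theorems) only needs the pointwise facts that $\varepsilon_\phi$ takes values in $2\lambda\Z$ and is locally constant on the open set $\phi^{-1}(\R\setminus(2\Z+1)\lambda)$, both of which your argument does establish rigorously. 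A clean fix is therefore to weaken the conclusion to exactly those two facts, or to add the hypothesis that $\phi$ attains no value in $(2\Z+1)\lambda$ at a local extremum.
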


\begin{proof}
Let $\lambda > 0$ and $\phi \in \Cont(\R)$.
By the definition of the modulo operator $\Mod_\lambda$ we have
\begin{equation*}
\varepsilon_\phi(t) = \phi(t) - \Mod_\lambda(\phi(t)) = 2\lambda \left\lfloor \frac{\phi(t) + \lambda}{2\lambda} \right\rfloor \in 2\lambda\Z
\quad \forall \, t \in \R.
\end{equation*}
If $\phi$ is constant, the statement is trivially true.
Hence, we assume that there is $t_0 \in \R$ such that
\begin{equation*}
\phi(t_0) \in \bigl(2\lambda k, 2\lambda (k+1)\bigr)
\end{equation*}
for some $k \in \Z$.
Then, the continuity of $\phi$ implies the existence of $\delta > 0$ such that
\begin{equation*}
\phi(t) \in \bigl(2\lambda k, 2\lambda (k+1)\bigr)
\quad \forall \, t \in \bigl(t_0-\delta,t_0+\delta\bigr).
\end{equation*}
Consequently, we can construct a family $\{I_j\}_{j \in J}$ of pairwise disjoint proper intervals $I_j \subset \R$ with non-empty interior such that
\begin{equation*}
\varepsilon_\phi = 2\lambda \sum_{j \in J} c_j \, \ind_{I_j}
\end{equation*}
for suitable coefficients $c_j \in \Z$, $j \in J$.
The index set $J$ is at most countable due to the fact that each interval $I_j$, $j \in J$, contains a rational number.
\end{proof}

Based on Proposition~\ref{proposition:modulo_decomposition} we can now prove the injectivity of $\Modulo^\lambda$ for functions $f$ with continuous Radon transform $\Radon f$ vanishing at infinity in the sense that
\begin{equation*}
\Radon_\theta f \in \Cont_0(\R) = \bigl\{f \in \Cont(\R) \bigm| f(t) \to \infty \mbox{ for } |t| \to \infty\bigr\}
\end{equation*}
for all $\theta \in [0,2\pi)$.
Note that this class includes compactly supported functions with continuous Radon transform.

\begin{proposition}
\label{proposition:MRT_injectivity}
Any function $f \in \Lebesgue^1(\R^2)$ with $\Radon_\theta f \in \Cont_0(\R)$ for all $\theta \in [0,2\pi)$ is uniquely determined by its Modulo Radon Transform $\Modulo^\lambda f$.
\end{proposition}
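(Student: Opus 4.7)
The plan is to reduce the claim to the classical injectivity of the Radon transform $\Radon$ on $\Lebesgue^1(\R^2)$, which is a standard consequence of the Fourier slice theorem. The main technical step is to show that, for each fixed angle $\theta \in [0,2\pi)$, the continuous projection $\Radon_\theta f$ is uniquely determined by the modulo projection $\Modulo^\lambda_\theta f$; once this is in place, the classical Radon result takes over and identifies $f$ from its projections.

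Suppose $f, \tilde f \in \Lebesgue^1(\R^2)$ both satisfy the hypothesis and have $\Modulo^\lambda f = \Modulo^\lambda \tilde f$. Fix $\theta$ and set $\phi := \Radon_\theta f$ and $\tilde \phi := \Radon_\theta \tilde f$; by assumption both lie in $\Cont_0(\R)$, so Proposition~\ref{proposition:modulo_decomposition} applies to each and gives
\begin{equation*}
\phi - \tilde\phi \;=\; \bigl(\Mod_\lambda(\phi) - \Mod_\lambda(\tilde\phi)\bigr) + \bigl(\varepsilon_\phi - \varepsilon_{\tilde\phi}\bigr).
\end{equation*}
The first bracket vanishes almost everywhere on $\R$ by the hypothesis $\Modulo^\lambda_\theta f = \Modulo^\lambda_\theta \tilde f$, while the second is everywhere valued in $2\lambda\Z$ by the explicit form of $\varepsilon_\phi$ and $\varepsilon_{\tilde\phi}$ in Proposition~\ref{proposition:modulo_decomposition}. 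Hence the continuous function $\phi - \tilde\phi$ takes values in the discrete, closed set $2\lambda\Z$ almost everywhere.

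The remaining step is a short topological argument. The set $U := \{t \in \R : (\phi - \tilde\phi)(t) \notin 2\lambda\Z\}$ is open by continuity; if it were non-empty it would have positive Lebesgue measure, contradicting the preceding almost-everywhere inclusion. Therefore $(\phi - \tilde\phi)(\R) \subseteq 2\lambda\Z$, and since $\R$ is connected while $2\lambda\Z$ is totally disconnected, $\phi - \tilde\phi$ must be constantly equal to some $2\lambda n$ with $n \in \Z$. The condition $\phi, \tilde\phi \in \Cont_0(\R)$ forces $n = 0$, and we conclude $\Radon_\theta f = \Radon_\theta \tilde f$. Since $\theta$ was arbitrary, $\Radon f = \Radon \tilde f$, and the classical injectivity of the Radon transform on $\Lebesgue^1(\R^2)$ yields $f = \tilde f$.

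The only delicate point is the passage from the almost-everywhere identity of the MRT data to a pointwise conclusion about the continuous difference $\phi - \tilde\phi$, which is exactly the topological argument above; the essential content is that a continuous function whose essential range lies in a closed discrete set is in fact pointwise valued there. Everything else is a direct application of Proposition~\ref{proposition:modulo_decomposition} together with the standard Radon inversion theory.
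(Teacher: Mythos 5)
Your proof is correct and follows essentially the same route as the paper: apply the modulo decomposition of Proposition~\ref{proposition:modulo_decomposition} to each projection, deduce that the continuous difference $\Radon_\theta f - \Radon_\theta \tilde f$ takes values in $2\lambda\Z$ and hence (being locally constant and vanishing at infinity) is identically zero, then invoke the Fourier slice theorem. Your explicit topological handling of the almost-everywhere identity and the connectedness argument is a slightly more careful rendering of the paper's assertion that a piecewise constant continuous function in $\Cont_0(\R)$ must vanish, but it is not a different proof.
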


\begin{proof}
Let $f, g \in \Lebesgue^1(\R^2)$ with $\Radon_\theta f, \Radon_\theta g \in \Cont_0(\R)$ for all $\theta \in [0,2\pi)$ and assume that
\begin{equation*}
\Modulo^\lambda f = \Modulo^\lambda g.
\end{equation*}
As $\Cont_0(\R)$ is a linear space, we have $\phi_\theta = \Radon_\theta f - \Radon_\theta g \in \Cont_0(\R)$ and modulo decomposition~\eqref{eq:modulo_decomposition} gives
\begin{equation*}
\phi_\theta = \Modulo^\lambda_\theta f + \varepsilon_{\Radon_\theta f} - \Modulo^\lambda_\theta g - \varepsilon_{\Radon_\theta g} = \varepsilon_{\Radon_\theta f} - \varepsilon_{\Radon_\theta g}.
\end{equation*}
Thus, the continuous function $\phi_\theta \in \Cont_0(\R)$ is piecewise constant and vanishing at infinity, which already implies $\phi_\theta = 0$, i.e., $\Radon_\theta f = \Radon_\theta g$ for all $\theta \in [0,2\pi)$.
Now, the well-known Fourier Slice Theorem states that
\begin{equation*}
\Fourier_1 (\Radon_\theta f)(\omega) = \Fourier_2 f(\omega\bftheta)
\quad \forall \, \omega \in \R.
\end{equation*}
This in combination with the injectivity of the Fourier transform finally gives $f = g$.
\end{proof}

We are now prepared to infer the injectivity of $\Modulo^\lambda$ for band-limited functions.
To this end, let $f$ belong to the Bernstein space $\Bernstein_\Omega^1(\R^2)$ for some bandwidth $\Omega > 0$, i.e.,
\begin{equation*}
f \in \Bernstein_\Omega^1(\R^2)
\quad \iff \quad
f \in \Lebesgue^1(\R^2) \cap \Band_\Omega(\R^2),
\end{equation*}
where $\Band_\Omega(\R^2)$ denotes the space of $\Omega$-band-limited functions, i.e.,
\begin{equation*}
f \in \Band_\Omega(\R^2)
\quad \iff \quad
\Fourier_2 f = \ind_{B_\Omega(0)} \, \Fourier_2 f.
\end{equation*}

\begin{corollary}[Injectivity of $\Modulo^\lambda$ on $\Bernstein_\Omega^1(\R^2)$]
\label{corollary:MRT_injective_bandlimited}
For any threshold $\lambda > 0$, the Modulo Radon Transform $\Modulo^\lambda$ is injective on the Bernstein space $\Bernstein_\Omega^1(\R^2)$ for any $\Omega > 0$.
\end{corollary}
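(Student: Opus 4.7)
The plan is to reduce the corollary directly to Proposition~\ref{proposition:MRT_injectivity}. Since $\Bernstein_\Omega^1(\R^2) \subset \Lebesgue^1(\R^2)$, the only thing to verify is that every $f \in \Bernstein_\Omega^1(\R^2)$ satisfies $\Radon_\theta f \in \Cont_0(\R)$ for each fixed $\theta \in [0,2\pi)$; then the injectivity of $\Modulo^\lambda$ on this class follows immediately.

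First, I would invoke Proposition~\ref{proposition:mapping_property} (or the $L^1$-norm computation in its proof) to record that $\Radon_\theta f \in \Lebesgue^1(\R)$ for every $\theta$. Next, I would use the Fourier Slice Theorem,
\begin{equation*}
\Fourier_1(\Radon_\theta f)(\omega) = \Fourier_2 f(\omega \bftheta) \quad \forall \, \omega \in \R,
\end{equation*}
which is already cited in the proof of Proposition~\ref{proposition:MRT_injectivity}. Because $f \in \Band_\Omega(\R^2)$, the right-hand side vanishes for $|\omega| > \Omega$, so $\Fourier_1(\Radon_\theta f)$ is supported in $[-\Omega, \Omega]$. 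At the same time, $\Radon_\theta f \in \Lebesgue^1(\R)$ implies that $\Fourier_1(\Radon_\theta f)$ is bounded and continuous by the Riemann--Lebesgue lemma.

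Combining compact support with boundedness yields $\Fourier_1(\Radon_\theta f) \in \Lebesgue^1(\R) \cap \Lebesgue^\infty(\R)$. Fourier inversion then represents $\Radon_\theta f$ (a.e., and hence after choosing the canonical representative) as the inverse Fourier transform of an $\Lebesgue^1$-function, so Riemann--Lebesgue applied in the opposite direction shows that this representative is continuous on $\R$ and vanishes at infinity, i.e.\ $\Radon_\theta f \in \Cont_0(\R)$ for every $\theta \in [0,2\pi)$. With this hypothesis secured, Proposition~\ref{proposition:MRT_injectivity} applies verbatim and delivers the injectivity of $\Modulo^\lambda$ on $\Bernstein_\Omega^1(\R^2)$.

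There is no real obstacle here; the only subtle point is justifying the decay $\Radon_\theta f(t) \to 0$ as $|t| \to \infty$ rather than merely continuity. This is handled cleanly by the double application of Riemann--Lebesgue together with the compact spectral support obtained from the Fourier Slice Theorem, and no additional smoothness of $f$ beyond membership in $\Bernstein_\Omega^1(\R^2)$ is required.
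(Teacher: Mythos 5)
Your proposal is correct and follows exactly the paper's route: reduce to Proposition~\ref{proposition:MRT_injectivity} by showing $\Radon_\theta f \in \Lebesgue^1(\R)$ and then using the Fourier Slice Theorem to conclude $\Radon_\theta f \in \Bernstein_\Omega^1(\R) \subset \Cont_0(\R)$. The only difference is that you spell out the standard inclusion $\Bernstein_\Omega^1(\R) \subset \Cont_0(\R)$ via the double Riemann--Lebesgue argument, which the paper simply asserts.
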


\begin{proof}
Let $f \in \Bernstein_\Omega^1(\R^2)$.
Then, for any $\theta \in [0,2\pi)$ we have $\Radon_\theta f \in \Lebesgue^1(\R)$ and the Fourier Slice Theorem implies that $\Radon_\theta f \in \Bernstein_\Omega^1(\R) \subset \Cont_0(\R)$ for all $\theta \in [0,2\pi)$.
Thus, according to Proposition~\ref{proposition:MRT_injectivity} the function $f$ is uniquely determined by its Modulo Radon Transform $\Modulo^\lambda f$.
\end{proof}

As a concrete example of compactly supported functions with continuous Radon transform we finally consider the space of continuous functions $\Cont_c(\R^2)$ with compact support, i.e.,
\begin{equation*}
\Cont_c(\R^2) = \bigl\{g \in \Cont(\R^2) \bigm| \exists \, R > 0: ~ \supp(f) \subset B_R(0)\bigr\}.
\end{equation*}

\begin{corollary}[Injectivity of $\Modulo^\lambda$ on $\Cont_c(\R^2)$]
\label{corollary:MRT_injective_compact_support}
For any threshold $\lambda > 0$, the Modulo Radon Transform $\Modulo^\lambda$ is injective on the space $\Cont_c(\R^2)$ of continuous functions with compact support.
\end{corollary}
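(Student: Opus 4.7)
The plan is to derive this corollary as a direct consequence of Proposition~\ref{proposition:MRT_injectivity}. To apply that proposition, I need to verify two hypotheses for any $f \in \Cont_c(\R^2)$: first that $f \in \Lebesgue^1(\R^2)$, and second that $\Radon_\theta f \in \Cont_0(\R)$ for every $\theta \in [0,2\pi)$. The integrability is immediate, since a continuous function with compact support is bounded and hence $\|f\|_{\Lebesgue^1(\R^2)} \leq \|f\|_{\Lebesgue^\infty(\R^2)} \cdot \mathrm{vol}(\supp(f)) < \infty$.

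For the second hypothesis, fix $\theta \in [0,2\pi)$ and let $R > 0$ be such that $\supp(f) \subset B_R(0)$. Then for any $t \in \R$, the line $\{\bfx \in \R^2 \mid \langle \bfx,\bftheta\rangle = t\}$ meets $B_R(0)$ only when $|t| \leq R$, so $\Radon_\theta f(t) = 0$ for $|t| > R$. In particular $\Radon_\theta f$ has compact support and therefore vanishes at infinity. For the continuity of $\Radon_\theta f$, I would parametrize the line by the rotation $H_\theta$ introduced in the proof of Proposition~\ref{proposition:well_definedness} and write $\Radon_\theta f(t) = \int_\R f(H_\theta(s,t)) \, \d s$, where the integrand is continuous in $t$ for each $s$ and is dominated by $\|f\|_{\Lebesgue^\infty(\R^2)} \, \ind_{[-R,R]}(s)$ uniformly in $t$. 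The dominated convergence theorem then yields continuity of $\Radon_\theta f$ on $\R$.

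Combining these two observations shows $\Radon_\theta f \in \Cont_c(\R) \subset \Cont_0(\R)$ for every $\theta$, so Proposition~\ref{proposition:MRT_injectivity} applies and $f$ is uniquely determined by $\Modulo^\lambda f$. I do not anticipate any genuine obstacle here; the corollary is essentially a verification that $\Cont_c(\R^2)$ sits inside the function class covered by Proposition~\ref{proposition:MRT_injectivity}, with the only mildly nontrivial step being the dominated convergence argument for continuity of the Radon projection.
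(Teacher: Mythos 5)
Your proposal is correct and follows essentially the same route as the paper: both reduce the corollary to Proposition~\ref{proposition:MRT_injectivity} by checking that $f \in \Lebesgue^1(\R^2)$ and that each projection $\Radon_\theta f$ lies in $\Cont_c(\R) \subset \Cont_0(\R)$. The only difference is in the continuity step, where you invoke dominated convergence with the majorant $\|f\|_{\Lebesgue^\infty(\R^2)} \, \ind_{[-R,R]}(s)$ while the paper uses the uniform continuity of the compactly supported $f$ to bound $|\Radon_\theta f(t_0) - \Radon_\theta f(t_n)|$ directly; both arguments are valid and of comparable weight.
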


\begin{proof}
Let $f \in \Cont_c(\R^2)$ and choose $R > 0$ such that
\begin{equation*}
f(\bfx) = 0
\quad \forall \, \|\bfx\|_2 > R.
\end{equation*}
Then, we have $f \in \Lebesgue^1(\R^2)$ and for any $\theta \in [0,2\pi)$ the projection $\Radon_\theta f \in \Lebesgue^1(\R)$ is also compactly supported with
\begin{equation*}
\Radon_\theta f(t) = 0
\quad \forall \, |t| > R.
\end{equation*}
We now show the continuity of $\Radon_\theta f$.
To this end, fix $t_0 \in \R$ and consider a sequence $(t_n)_{n \in \N}$ with
\begin{equation*}
\lim_{n \to \infty} t_n = t_0.
\end{equation*}
For the sake of brevity, we introduce the notation
\begin{equation*}
f_{(t,\theta)}(s) = f(t \cos(\theta) - s \sin(\theta),t \sin(\theta) + s \cos(\theta))
\quad \mbox{ for } s \in \R
\end{equation*}
so that
\begin{equation*}
|\Radon_\theta f(t_0) - \Radon_\theta f(t_n)| = \left|\int_\R f_{(t_0,\theta)}(s) - f_{(t_n,\theta)}(s) \: \d s\right| \leq \int_{-R}^R \left|f_{(t_0,\theta)}(s) - f_{(t_n,\theta)}(s)\right| \: \d s.
\end{equation*}
Because $f \in \Cont_c(\R^2)$ is continuous and compactly supported, it is also uniformly continuous and we obtain
\begin{equation*}
\max_{|s| \leq R} \left|f_{(t_0,\theta)}(s) - f_{(t_n,\theta)}(s)\right| \xrightarrow{n \to \infty} 0.
\end{equation*}
This implies that
\begin{equation*}
|\Radon_\theta f(t_0) - \Radon_\theta f(t_n)| \leq 2R \max_{|s| \leq R} \left|f_{(t_0,\theta)}(s) - f_{(t_n,\theta)}(s)\right| \xrightarrow{n \to \infty} 0.
\end{equation*}
Thus, for any $\theta \in [0,2\pi)$ we have $\Radon_\theta f \in \Cont_c(\R) \subset \Cont_0(\R)$ and according to Proposition~\ref{proposition:MRT_injectivity} the function $f$ is uniquely determined by its Modulo Radon Transform $\Modulo^\lambda f$.
\end{proof}

In particular, Corollaries~\ref{corollary:MRT_injective_bandlimited} and~\ref{corollary:MRT_injective_compact_support} imply that the MRT is invertible as a mapping
\begin{equation*}
\Modulo^\lambda: \Bernstein_\Omega^1(\R^2) \longrightarrow \Modulo^\lambda(\Bernstein_\Omega^1(\R^2))
\end{equation*}
or
\begin{equation*}
\Modulo^\lambda: \Cont_c(\R^2) \longrightarrow \Modulo^\lambda(\Cont_c(\R^2)).
\end{equation*}

\subsection{Semi-discrete Identifiability}
\label{sec:identifiability}

In Corollary~\ref{corollary:MRT_injective_bandlimited} we have shown that any Bernstein function $f \in \Bernstein_\Omega^1(\R^2)$ is uniquely defined by its Modulo Radon Transform $\Modulo^\lambda f$ when its values are available on the full continuous domain $\Sphere \times \R$. At the same time, in the fully discrete scenario where both input variables are subsampled, identifiability is not even available when no modulo operation is applied. In this section we show that in the intermediate, semi-discrete scenario where only one variable is subsampled, i.e.,
\begin{equation*}
\bigl\{\Modulo^\lambda f(\bftheta,m\T) \bigm| \bftheta \in \Sphere, ~ m \in \Z \bigr\},
\end{equation*}
injectivity is still possible provided that the sampling rate $\T > 0$ is sufficiently small.

\begin{theorem}[Semi-discrete injectivity condition for MRT]
\label{theorem:identifiability}
Any function $f \in \Bernstein_\Omega^1(\R^2)$ is determined by its MRT samples $\bigl\{\Modulo^\lambda_\theta f(m\T) \bigm| \theta \in [0,\pi), ~ m \in \Z\bigr\}$ if the sampling rate $\T > 0$ satisfies the oversampling condition
\begin{equation}
\label{eq:optimal_oversampling_rate}
\T < \frac{\pi}{\Omega}.
\end{equation}
\end{theorem}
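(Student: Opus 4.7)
The plan is to reduce the semi-discrete MRT identification to a univariate modulo sampling problem, one angle at a time, and then to assemble the recovered projections via Fourier inversion. First I would fix $\theta \in [0,\pi)$ and apply the Fourier Slice Theorem (exactly as in the proof of Proposition~\ref{proposition:MRT_injectivity}) to observe that $\Fourier_1(\Radon_\theta f)(\omega) = \Fourier_2 f(\omega\bftheta)$ is supported in $[-\Omega,\Omega]$; together with $\Radon_\theta f \in \Lebesgue^1(\R)$ from Proposition~\ref{proposition:mapping_property}, this places $\Radon_\theta f$ in the univariate Bernstein space $\Bernstein_\Omega^1(\R)$. The task then reduces to showing that any $g \in \Bernstein_\Omega^1(\R)$ is uniquely determined by its modulo samples $(\Mod_\lambda(g(m\T)))_{m \in \Z}$ under the strict oversampling condition $\T < \pi/\Omega$.

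For this univariate identifiability I would take two candidates $g_1,g_2 \in \Bernstein_\Omega^1(\R)$ with identical modulo samples and set $h = g_1 - g_2 \in \Bernstein_\Omega^1(\R)$, so that $h(m\T) \in 2\lambda\Z$ for every $m \in \Z$. Since $\Fourier h$ is continuous with compact support and hence integrable, $h$ lies in $\Cont_0(\R)$ by the Riemann--Lebesgue lemma, forcing $h(m\T) \to 0$ as $|m| \to \infty$. Combined with the integer quantization, this means that only finitely many samples can be non-zero, so the sample sequence generates a bona fide trigonometric polynomial $P(\xi) = \sum_m h(m\T)\,\e^{-\i\xi m\T}$. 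Poisson summation then yields
\begin{equation*}
P(\xi) \;=\; \frac{1}{\T} \sum_{n \in \Z} \Fourier h\bigl(\xi + \tfrac{2\pi n}{\T}\bigr),
\end{equation*}
and because $\Fourier h$ is supported in $[-\Omega,\Omega]$ while $2\pi/\T > 2\Omega$, the translates on the right-hand side are pairwise disjoint and $P$ vanishes identically on the non-empty open interval $(\Omega,\pi/\T)$ within one period. Since a trigonometric polynomial vanishing on an open set must be the zero polynomial, $h(m\T) = 0$ for every $m$, and the classical Shannon sampling theorem (applicable since $\T < \pi/\Omega$) then forces $h \equiv 0$, i.e.\ $g_1 = g_2$.

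Once $\Radon_\theta f$ has been recovered for each $\theta \in [0,\pi)$, the evenness relation~\eqref{eq:MRT_evenness} extends the recovery to all of $\Sphere$, and a last appeal to the Fourier Slice Theorem together with injectivity of the two-dimensional Fourier transform determines $f$. The main obstacle is the Poisson summation step above: the strict inequality $\T < \pi/\Omega$ is indispensable, as at the critical Nyquist rate the gap $(\Omega,\pi/\T)$ collapses, the trigonometric-polynomial argument fails, and one would in fact anticipate non-trivial null functions $h \in \Bernstein_\Omega^1(\R)$ whose samples all lie in $2\lambda\Z$.
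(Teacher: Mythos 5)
Your proposal is correct, and it follows the same overall skeleton as the paper's proof (reduce to a fixed angle via the Fourier Slice Theorem, observe that the difference $\psi=\Radon_\theta f-\Radon_\theta g$ of two candidates has samples in $2\lambda\Z$, show that only finitely many of these can be nonzero, conclude $\psi=0$, and finish with evenness and injectivity of $\Fourier_2$). Where you genuinely diverge is in the two technical steps of the univariate argument. To rule out infinitely many nonzero quantized samples, the paper uses Parseval's identity $\|\psi\|_{\Lebesgue^2(\R)}^2=\T\sum_k|\psi(k\T)|^2<\infty$, whereas you use that $\Fourier_1\psi$ is continuous with compact support, hence integrable, so $\psi\in\Cont_0(\R)$ by Riemann--Lebesgue; both are valid. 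More substantively, to conclude $\psi=0$ the paper invokes an external result (Lemma~\ref{lemma:identifiability}, i.e.\ \cite[Lemma 1]{Bhandari2019}) stating that a Paley--Wiener function is determined by its samples on a cofinite subgrid, while you give a self-contained replacement: the finitely supported sample sequence defines a trigonometric polynomial which, by Poisson summation and the non-overlap of the spectral translates under $2\pi/\T>2\Omega$, vanishes on a non-empty open interval and hence identically, forcing \emph{all} samples to be zero and then $\psi\equiv 0$ by Shannon's theorem. Your route buys a shorter, citation-free proof and actually yields the slightly stronger conclusion that the exceptional sample set is empty; the paper's route outsources exactly this step to a known identifiability lemma, which keeps the argument modular and reusable when the exceptional set is nontrivial. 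Your closing remark about the strictness of $\T<\pi/\Omega$ is consistent with the paper's reliance on choosing $\varepsilon>0$ with $\T\le\pi/(\Omega+\varepsilon)$.
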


The proof of Theorem~\ref{theorem:identifiability} is based on the following identifiability theorem for functions from the Paley-Wiener space
\begin{equation*}
\PW_\Omega = \bigl\{\phi \in \Lebesgue^2(\R) \bigm| \supp(\Fourier_1 \phi) \subseteq [-\Omega,\Omega]\bigr\},
\end{equation*}
which is proven in~\cite{Bhandari2019}.

\begin{lemma}[{\cite[Lemma 1]{Bhandari2019}}]
\label{lemma:identifiability}
Let $\varepsilon > 0$ and $J \subset \Z$ be a finite set.
Then, any $\phi \in \PW_\Omega$ is uniquely characterized by its samples on the grid $\T_\varepsilon \cdot (\Z \setminus J)$ with $0 < \T_\varepsilon \leq \frac{\pi}{\Omega+\varepsilon}$.
\end{lemma}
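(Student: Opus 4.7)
The plan is to prove a uniqueness statement: if two functions in $\PW_\Omega$ agree on the sample set $\T_\varepsilon \cdot (\Z \setminus J)$, they must coincide. By linearity this reduces to showing that any $\phi \in \PW_\Omega$ vanishing on $\{m\T_\varepsilon : m \in \Z \setminus J\}$ is identically zero. The guiding intuition is that the hypothesis $\T_\varepsilon \leq \pi/(\Omega+\varepsilon)$ realizes strict oversampling with a margin of $\varepsilon > 0$, leaving just enough redundancy to absorb the finitely many missing samples indexed by $J$.

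I would work in the framework of entire function theory. By the Paley--Wiener theorem, $\phi$ extends to an entire function of exponential type at most $\Omega$ that is bounded on $\R$. The first step is to complete the zero pattern: set $G(z) := \phi(z)\,p(z)$ with $p(z) := \prod_{m \in J}(z - m\T_\varepsilon)$. Multiplication by the polynomial $p$ preserves exponential type, so $G$ is entire of exponential type $\leq \Omega$ and now vanishes at every point of the full lattice $\T_\varepsilon \Z$. Since $\sin(\pi z/\T_\varepsilon)$ is entire of exponential type $\pi/\T_\varepsilon$ with simple zeros exactly at that lattice, the quotient $F(z) := G(z)/\sin(\pi z/\T_\varepsilon)$ extends across the removable singularities to an entire function.

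The heart of the argument is a growth comparison of indicator functions. One has $h_{\sin(\pi \cdot /\T_\varepsilon)}(\theta) = (\pi/\T_\varepsilon)|\sin\theta|$ and $h_G(\theta) \leq \Omega|\sin\theta|$, so
\begin{equation*}
h_F(\theta) \;\leq\; h_G(\theta) - h_{\sin}(\theta) \;\leq\; (\Omega - \pi/\T_\varepsilon)|\sin\theta| \;\leq\; -\varepsilon|\sin\theta|,
\end{equation*}
using $\pi/\T_\varepsilon \geq \Omega+\varepsilon$. In particular, $F(iy)$ decays exponentially as $|y| \to \infty$.

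The main obstacle I anticipate is turning this indicator estimate into the pointwise conclusion $F \equiv 0$, because on the real axis $F$ can a priori exhibit polynomial growth inherited from the factor $p$. I would resolve this by a Phragm\'en--Lindel\"of argument in each of the half-planes $\{\pm\mathrm{Im}(z) > 0\}$: the exponential decay of $F$ along the imaginary axis, together with at most polynomial growth on $\R$ (inherited from the boundedness of $\phi$ and the matching of zeros between $G$ and $\sin(\pi z/\T_\varepsilon)$, which removes the apparent singularities), forces $F$ to be bounded on $\C$. Liouville's theorem then makes $F$ constant, and the imaginary-axis decay identifies this constant as zero. Hence $G \equiv 0$, and since $p$ is a nonzero polynomial, $\phi \equiv 0$. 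An alternative route, which bypasses the same obstacle more abstractly, is to invoke a density-of-zeros theorem in the spirit of Plancherel--P\'olya or Levinson: the set $\{m\T_\varepsilon : m \in \Z \setminus J\}$ has density $1/\T_\varepsilon \geq (\Omega+\varepsilon)/\pi > \Omega/\pi$, strictly exceeding the maximum real-zero density admissible for a nontrivial element of $\PW_\Omega$, and hence directly forcing $\phi \equiv 0$.
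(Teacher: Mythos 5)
The paper never proves this lemma: it is imported verbatim from \cite[Lemma~1]{Bhandari2019} and used as a black box in the proof of Theorem~\ref{theorem:identifiability}, so there is no in-paper argument to compare against and your proposal must stand on its own --- which, in essence, it does. The reduction by linearity, completing the zero set with the polynomial $p$, dividing by $\sin(\pi z/\T_\varepsilon)$, and comparing growth is the standard complex-analytic route to uniqueness sets for Paley--Wiener spaces, and the $\varepsilon$-margin enters exactly where it should, via $\pi/\T_\varepsilon - \Omega \geq \varepsilon$. Two technical points need care if you write this out in full. First, the inequality $h_F \leq h_G - h_{\sin}$ is not the generic direction for indicators: from $G = F\sin$ one only gets $h_G \leq h_F + h_{\sin}$, i.e.\ $h_F \geq h_G - h_{\sin}$; you either need equality, which holds because the sine factor has completely regular growth, or --- simpler --- you can bypass indicators altogether, since $|\sin(\pi \i y/\T_\varepsilon)| = \sinh(\pi|y|/\T_\varepsilon)$ yields the decay $|F(\i y)| \lesssim (1+|y|)^{|J|}\,\e^{(\Omega-\pi/\T_\varepsilon)|y|}$ by direct computation. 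Second, the claimed polynomial growth of $F$ on $\R$ is not immediate, because $\sin(\pi x/\T_\varepsilon)$ vanishes at the lattice points; the standard fix is to bound $F$ on the lines $\mathrm{Im}\,z = \pm 1$, where $|\sin(\pi z/\T_\varepsilon)| \geq \sinh(\pi/\T_\varepsilon)$, and invoke Phragm\'en--Lindel\"of in the strip, and in the half-planes one should first divide by $(z\pm\i)^{|J|}$ before applying Phragm\'en--Lindel\"of; the correct conclusion is then that $F$ is a polynomial (not merely a bounded, hence constant, function via Liouville), which the exponential decay along $\i\R$ forces to vanish. Your alternative route via zero densities is also valid and is arguably the cleaner of the two: $\PW_\Omega$ lies in the Cartwright class, a nonzero Cartwright-class function of exponential type at most $\Omega$ has real zeros of density at most $\Omega/\pi$ per unit length (Levinson), and deleting the finitely many points indexed by $J$ does not change the density $1/\T_\varepsilon \geq (\Omega+\varepsilon)/\pi$ of the sampling grid, giving the contradiction immediately.
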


With Lemma~\ref{lemma:identifiability} we are now prepared to prove the injectivity theorem.

\begin{proof}[Proof of Theorem~\ref{theorem:identifiability}]
Let $f \in \Bernstein_\Omega^1(\R^2)$ and $T < \frac{\pi}{\Omega}$.
Then, there exists $\varepsilon > 0$ such that $\T \leq \frac{\pi}{\Omega + \varepsilon}$.
Now, let $f,g \in \Bernstein_\Omega^1(\R^2)$ have the same MRT samples, i.e.,
\begin{equation*}
\Modulo^\lambda_\theta f(m \T) = \Modulo^\lambda_\theta g(m \T) \quad \forall \, \theta \in [0,\pi), ~ m \in \Z.
\end{equation*}
Recall that the Fourier Slice Theorem implies that $\Radon_\theta f, \Radon_\theta g \in \Bernstein_\Omega^1(\R) \subset \PW_\Omega$ for all $\theta \in [0,\pi)$ and the modulo decomposition~\eqref{eq:modulo_decomposition} shows that
\begin{equation*}
\psi(m\T) = \Radon_\theta f(m\T) - \Radon_\theta g(m\T) = \varepsilon_{\Radon_\theta f}(m\T) - \varepsilon_{\Radon_\theta g}(m\T) \in 2\lambda \Z
\quad \forall \, m \in \Z.
\end{equation*}
As $\PW_\Omega$ is a linear space, we also have $\psi = \Radon_\theta f - \Radon_\theta g \in \PW_\Omega \subset \PW_{\Omega+\varepsilon} \subset \Lebesgue^2(\R)$, and Shannon's sampling theorem implies
\begin{equation*}
\int_\R |\psi(t)|^2 \: \d t = \T \sum_{k \in \Z} |\psi(k\T)|^2 < \infty.
\end{equation*}
Since $\psi(m\T) \in 2\lambda\Z$ for all $m \in \Z$, the above series can only converge if all but finitely many $\psi(j\T)$, $j \in J$, are zero.
Thus, applying Lemma~\ref{lemma:identifiability} shows $\psi = 0$, i.e.,
\begin{equation*}
\Radon_\theta f = \Radon_\theta g \quad \forall \, \theta \in [0,\pi).
\end{equation*}
This in combination with the evenness condition~\eqref{eq:MRT_evenness} and again the Fourier Slice Theorem as well as the injectivity of the Fourier transform finally gives $f = g$.
\end{proof}


\section{Reconstruction Schemes}
\label{sec:reconstruction}

In Theorem~\ref{theorem:identifiability} we have seen that any band-limited function $f \in \Lebesgue^1(\R^2)$ is uniquely determined by its semi-discrete MRT samples
\begin{equation*}
\bigl\{\Modulo^\lambda f(\bftheta,m\T) \bigm| \bftheta \in \Sphere, ~ m \in \Z \bigr\}
\end{equation*}
provided that the sampling rate $\T > 0$ satisfies the oversampling condition~$\T < \frac{\pi}{\Omega}$.
The proof, however, is non-constructive and we now aim at providing tractable reconstruction algorithms with provable recovery guarantees.
We proceed by first devising a reconstruction strategy for the semi-discrete setting~\eqref{eq:MRT_samples_discrete}, making use of ideas from Unlimited Sampling (US), as first proposed in~\cite{Bhandari2017, Bhandari2020a,Bhandari2020g} and adapted to our setting.
The key ingredient to transfer these findings to the fully discrete case \eqref{eq:MRT_samples_fully_discrete} is the notion of a compact $\lambda$-exceedance, which allows us to deal with a finite number of samples per projection angle.

\subsection{Recovery from Semi-Discrete MRT Samples}
\label{sec:reconstruction_semi-discrete}

We start with describing a sequential reconstruction scheme for recovering $f \in \Bernstein_\Omega^1(\R^2)$ with known bandwidth $\Omega > 0$ from given semi-discrete MRT samples
\begin{equation}
\label{eq:MRT_samples_discrete}
\bigl\{\Modulo^\lambda_\theta f(m\T) \bigm| \theta \in [0,\pi), ~ m \in \Z\bigr\}.
\end{equation}
In the first step we recover $\Radon f$ by utilizing a variant of Unlimited Sampling (US), which makes use of the forward difference operator $\Delta: \R^\Z \longrightarrow \R^\Z$, $\Delta a\sqb{k} = a[k+1] - a\sqb{k}$, as well as the anti-difference operator $\Sum: \R^\Z \longrightarrow \R^\Z$,
\begin{equation*}
\Sum a\sqb{k} = \begin{cases}
\sum_{j=0}^{k-1} a[j] & \text{for } k > 0\\
0 & \text{for } k = 0\\
-\sum_{j=k}^{-1} a[j] & \text{for } k < 0.
\end{cases}
\end{equation*}
In the second step, we finally recover $f$ by applying the filtered back projection (FBP) formula
\begin{equation}
\label{eq:FBP}
f_\Omega = \frac{1}{2} \Radon^\# (F_\Omega * \Radon_\theta f),
\end{equation}
where $F_\Omega \in \PW_\Omega$ denotes a low-pass reconstruction filter of the form
\begin{equation}
\label{eq:recon_filter}
\Fourier_1 F_\Omega(S) = |S| \, W(\nicefrac{S}{\Omega})
\end{equation}
with an even window $W \in \Lebesgue^\infty(\R)$ supported in $[-1,1]$ and $\Radon^\#$ is the back projection operator
\begin{equation*}
\Radon^\# g(\bfx) = \frac{1}{2\pi} \int_{\Sphere} g(\bftheta,\bfx^\top \bftheta) \: \d\bftheta.
\end{equation*}
Our recovery scheme is summarized in Algorithm~\ref{algo:US_FBP}.
\begin{algorithm}[t]
\caption{Unlimited Sampling--Filtered Back Projection (US-FBP)}
\label{algo:US_FBP}
\begin{algorithmic}[1]
\REQUIRE samples 
$y_\theta^\lambda\sqb{k} = p_\theta^\lambda(k\T)$ for $k \in \Z$ and $\theta \in [0,\pi)$, uniform bound $2\lambda\Z \ni \beta_f \geq \|p_\theta\|_\infty$
\medskip
\STATE \textbf{choose} $N = \left\lceil\frac{\log(\lambda) - \log(\beta_f)}{\log(\T\Omega\e)}\right\rceil$, $J = 6 \frac{\beta_f}{\lambda}$
\smallskip
\STATE $s_{(0)}\sqb{k} = \bigl(\Mod_\lambda(\Delta^N y_\theta^\lambda) - \Delta^N y_\theta^\lambda\bigr)\sqb{k}$
\FOR{$n=0,\ldots,N-2$}
\STATE $s_{(n+1)}\sqb{k} = 2\lambda \left\lceil\frac{\lfloor\nicefrac{\Sum s_{(n)}\sqb{k}}{\lambda}\rfloor}{2}\right\rceil$
\smallskip
\STATE $\kappa_n = \left\lfloor\frac{\Sum s_{(n+1)}[1] - \Sum s_{(n+1)}[J+1]}{12 \beta_f} + \frac{1}{2}\right\rfloor$
\smallskip
\STATE $s_{(n+1)}\sqb{k} = s_{(n+1)}\sqb{k} + 2 \lambda \kappa_n$
\ENDFOR
\STATE $y_\theta\sqb{k} = y_\theta^\lambda\sqb{k} + (\Sum s_{(N-1)})\sqb{k} - \lim_{j \to \infty} (\Sum s_{(N-1)})[j]$
\STATE $p_\theta = \sum_{k \in \Z} y_\theta\sqb{k} \sinc\left(\frac{\pi}{\T}(\cdot - k\T)\right)$
\medskip
\ENSURE US-FBP reconstruction $f_\Omega^\lambda = \frac{1}{2} \Radon^\# (F_\Omega * p_\theta)$
\end{algorithmic}
\end{algorithm}
In the following theorem we give a sufficient condition for reconstructing $f \in \Bernstein_\Omega^1(\R^2)$ from MRT samples~\eqref{eq:MRT_samples_discrete} by applying Algorithm~\ref{algo:US_FBP} with
\begin{equation*}
p_\theta = \Radon_\theta f
\quad \mbox{ and } \quad
p_\theta^\lambda = \Modulo_\theta^\lambda f.
\end{equation*}

\begin{theorem}
\label{theorem:US_FBP}
Any $f \in \Bernstein_\Omega^1(\R^2)$ can be uniquely recovered from semi-discrete MRT samples~\eqref{eq:MRT_samples_discrete} by applying Algorithm~\ref{algo:US_FBP} if
\begin{equation*}
\T < \frac{1}{\Omega\e}
\end{equation*}
and the reconstruction filter $F_\Omega$ is chosen to be the Ram-Lak filter with window $W = \ind_{[-1,1]}$.
\end{theorem}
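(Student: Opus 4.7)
The plan is to split the proof into two largely independent parts. For each fixed $\theta \in [0,\pi)$, steps 1--9 of Algorithm~\ref{algo:US_FBP} reconstruct the full univariate projection $p_\theta = \Radon_\theta f$ from its modulo samples $y_\theta^\lambda[k] = \Mod_\lambda(p_\theta(k\T))$; the final FBP step then assembles $f$ from these projections. The Fourier Slice Theorem gives $p_\theta \in \Bernstein_\Omega^1(\R) \subset \PW_\Omega$, and with the Ram-Lak window $W = \ind_{[-1,1]}$, the multiplier $\Fourier_1 F_\Omega(S) = |S|\ind_{[-\Omega,\Omega]}(S)$ agrees with the full Riesz-potential symbol $|\cdot|$ on the support of $\Fourier_1 p_\theta$. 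Consequently the inversion identity $f = \tfrac{1}{2}\Radon^\#(F_\Omega * p_\theta)$ is exact for $f \in \Bernstein_\Omega^1(\R^2)$, so the burden reduces to verifying the US-part of the algorithm.

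The cornerstone of that verification is a finite-difference Bernstein-type bound: for any $\phi \in \PW_\Omega$ and $\T > 0$,
\begin{equation*}
\|\Delta^N \phi(\cdot\T)\|_\infty \leq (\T\Omega \e)^N \|\phi\|_\infty,
\end{equation*}
which I would derive by writing $\Delta^N \phi(k\T)$ through the Paley-Wiener representation and bounding the symbol $(\e^{\i\omega\T}-1)^N$ for $|\omega|\leq\Omega$ using $|\e^{\i x}-1|\leq|x|$ and Stirling's factorial estimate. With the choice $N = \lceil(\log\lambda-\log\beta_f)/\log(\T\Omega\e)\rceil$ and the standing assumption $\T < 1/(\Omega\e)$, this produces $\|\Delta^N p_\theta(\cdot\T)\|_\infty \leq \lambda$, so $\Mod_\lambda$ acts as the identity on $\Delta^N p_\theta(\cdot\T)$. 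Writing the modulo decomposition of Proposition~\ref{proposition:modulo_decomposition} pointwise, $p_\theta(k\T) = y_\theta^\lambda[k] + \varepsilon[k]$ with $\varepsilon[k] \in 2\lambda\Z$, and using the linearity of $\Delta$ together with the invariance of $2\lambda\Z$ under $\Delta$, one obtains
\begin{equation*}
\Mod_\lambda(\Delta^N y_\theta^\lambda)[k] - \Delta^N y_\theta^\lambda[k] = \Delta^N\varepsilon[k],
\end{equation*}
so step 2 yields $s_{(0)} = \Delta^N\varepsilon$ exactly.

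The remainder proceeds by finite induction on $n$: assuming $s_{(n)} = \Delta^{N-n}\varepsilon$, the anti-difference satisfies $\Sum s_{(n)}[k] = \Delta^{N-n-1}\varepsilon[k] + C_n$ for an unknown additive constant $C_n \in 2\lambda\Z$. Step 4 is legitimate because $\Delta^{N-n-1}\varepsilon[k] \in 2\lambda\Z$ and snaps to the nearest even multiple of $\lambda$; step 5 pins $\kappa_n$ down by comparing $\Sum s_{(n+1)}$ at indices $1$ and $J+1$, and the choice $J = 6\beta_f/\lambda$ together with the same finite-difference Bernstein bound (applied to higher orders $\Delta^{N-n-1}p_\theta$) keeps this estimator strictly within a single $2\lambda$ cell. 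After the $N-1$ passes of the loop one has $s_{(N-1)} = \Delta\varepsilon$; step 8 applies $\Sum$ once more to obtain $\varepsilon[k] - \varepsilon[0]$ and subtracts $\lim_{j\to\infty}(\Sum s_{(N-1)})[j] = -\varepsilon[0]$, which is well-defined because $p_\theta \in \Bernstein_\Omega^1(\R) \subset \Cont_0(\R)$ forces $\varepsilon[k]\to 0$. This recovers $\varepsilon[k]$ and hence $y_\theta[k] = p_\theta(k\T)$ for every $k$; step 9 interpolates via sinc (the hypothesis $\T < 1/(\Omega\e) < \pi/\Omega$ satisfies Nyquist), and step 10 closes with exact FBP. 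The most delicate part to make rigorous is the cell-identification in step 5: one has to show that the combined bound on $\beta_f$ and on the higher-order differences of $p_\theta$ keeps $|\Sum s_{(n+1)}[1] - \Sum s_{(n+1)}[J+1]|$ within $6\beta_f$ of the correct $2\lambda$-multiple for every iteration $n$, which is the content that forces the constants $\T < 1/(\Omega\e)$ and $J = 6\beta_f/\lambda$ to appear exactly as stated.
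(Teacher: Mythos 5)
Your proposal follows the paper's proof essentially verbatim: the same reduction to a per-angle Unlimited Sampling recovery via the finite-difference bound $\|\Delta^N \gamma\|_\infty \leq (\T\Omega\e)^N\|\phi\|_\infty$ (Lemma~\ref{lemma:difference_norm}), the same induction $s_{(n)} = \Delta^{N-n}\varepsilon$ with the two-point estimator at indices $1$ and $J+1$ resolving the summation constant, the same limit argument for the final offset (the paper gets $\varepsilon_\theta \in c_0(\Z)$ from Shannon's $\ell^2$-identity rather than from $\Cont_0$, an immaterial difference), and the same exact Shannon-plus-Ram-Lak-FBP conclusion. The one step you flag as delicate---verifying that $\zeta_J^{(n)}$ stays within $6\beta_f$ of $2\lambda\kappa J$, which uses $\|\Delta^{N-n-2}y_\theta^\lambda\|_\infty \leq 2^{N-n-2}\lambda$ and $2^{N-1}\leq \beta_f/\lambda$---is carried out explicitly in the paper and works exactly as you describe.
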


The proof of Theorem~\ref{theorem:US_FBP} is based on the following lemma, which is proven in~\cite{Bhandari2017}.

\begin{lemma}[{\cite[Lemma 1]{Bhandari2017}}]
\label{lemma:difference_norm}
Let $\phi \in \PW_\Omega$ and $N \in \N$.
Then, the samples $\gamma\sqb{k} = \phi(k\T)$, $k \in \Z$, satisfy
\begin{equation*}
\|\Delta^N \gamma\|_\infty \leq (\T \Omega \e)^N \, \|g\|_\infty.
\end{equation*}
\end{lemma}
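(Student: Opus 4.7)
The plan is to combine Bernstein's inequality for bandlimited functions (which controls sup-norms of derivatives by $\Omega$ times the sup-norm of the function itself) with an iterated-integral representation of the forward difference operator. Since the lemma compares a finite-difference quantity on samples with a sup-norm on the continuous function, these two tools match perfectly: one lets us pass from finite differences to derivatives, and the other controls derivatives of bandlimited functions.

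First, I would verify that $\phi$ is actually in $\Lebesgue^\infty(\R)$ so that Bernstein's inequality is applicable. Since $\phi \in \PW_\Omega$, its Fourier transform $\Fourier_1\phi$ lies in $\Lebesgue^2([-\Omega,\Omega])$, and because this interval has finite measure, Cauchy--Schwarz gives $\Fourier_1\phi \in \Lebesgue^1(\R)$. Fourier inversion then shows that $\phi$ is continuous and uniformly bounded on $\R$, with $\|\phi\|_\infty < \infty$.

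Next, I would invoke Bernstein's inequality: for any $\phi \in \Band_\Omega(\R) \cap \Lebesgue^\infty(\R)$ one has $\|\phi'\|_\infty \leq \Omega \, \|\phi\|_\infty$. Since differentiation preserves the bandwidth, an easy induction yields
\begin{equation*}
\|\phi^{(N)}\|_\infty \leq \Omega^N \, \|\phi\|_\infty \quad \forall \, N \in \N.
\end{equation*}
In parallel, I would establish the representation
\begin{equation*}
\Delta^N \gamma\sqb{k} = \int_0^\T \!\!\cdots\! \int_0^\T \phi^{(N)}(k\T + s_1 + \cdots + s_N) \: \d s_1 \cdots \d s_N,
\end{equation*}
which follows by induction on $N$: the base case $N=1$ is the fundamental theorem of calculus applied to $\Delta \gamma\sqb{k} = \phi((k+1)\T) - \phi(k\T)$, and the inductive step uses the translation-commutation identity $\Delta^{N+1} \gamma\sqb{k} = \Delta^N \gamma\sqb{k+1} - \Delta^N \gamma\sqb{k}$ together with FTC again.

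Combining these two ingredients yields
\begin{equation*}
|\Delta^N \gamma\sqb{k}| \leq \T^N \, \|\phi^{(N)}\|_\infty \leq (\T\Omega)^N \, \|\phi\|_\infty \leq (\T\Omega\e)^N \, \|\phi\|_\infty,
\end{equation*}
where the final step is the trivial inflation by $\e^N \geq 1$ that matches the lemma's stated constant. The main technical obstacle is the Bernstein inequality itself, but this is a classical result that one would cite rather than reprove; everything else is either elementary Fourier analysis (to secure $\phi \in \Lebesgue^\infty$) or a direct induction. Note that the factor $\e^N$ in the statement is slack, being introduced presumably for compatibility with downstream estimates in Algorithm~\ref{algo:US_FBP} such as the threshold condition $\T < \frac{1}{\Omega\e}$ appearing in Theorem~\ref{theorem:US_FBP}.
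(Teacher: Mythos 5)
Your proof is correct, but the comparison with the paper is moot in one respect: the paper never proves this lemma at all, it imports it wholesale via the citation \cite[Lemma~1]{Bhandari2017} (``which is proven in\ldots''), so what you have produced is a self-contained replacement for an external reference. It holds up: Cauchy--Schwarz on $[-\Omega,\Omega]$ gives $\Fourier_1\phi\in\Lebesgue^1(\R)$ and hence $\|\phi\|_\infty<\infty$; $\phi$ extends to an entire function of exponential type at most $\Omega$ by Paley--Wiener, so Bernstein's inequality applies and iterates (differentiation preserves the spectral support) to give $\|\phi^{(N)}\|_\infty\le\Omega^N\|\phi\|_\infty$; and your iterated-integral formula for $\Delta^N\gamma$ is a routine induction using the fundamental theorem of calculus and the shift-invariance of $\Delta$. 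You also tacitly repair the typo in the statement, where $\|g\|_\infty$ should read $\|\phi\|_\infty$. The noteworthy difference is quantitative: your argument yields the strictly sharper bound $\|\Delta^N\gamma\|_\infty\le(\T\Omega)^N\,\|\phi\|_\infty$, so the factor $\e^N$ is, as you observe, pure slack in your derivation---evidently an artifact of however the cited reference obtains the estimate (an argument the paper does not reproduce), not a necessity. This is more than cosmetic, because the constant in this lemma is exactly what drives the oversampling conditions $\T<\nicefrac{1}{\Omega\e}$ in Theorem~\ref{theorem:US_FBP} and Theorem~\ref{theo:UL_lambda}: with your constant the natural requirement weakens to $\T\Omega<1$, i.e.\ $\T<\nicefrac{1}{\Omega}\approx 0.32\,\nicefrac{\pi}{\Omega}$, which sits almost exactly at the empirical success threshold of roughly $0.3$--$0.4$ times the Nyquist spacing reported in the tightness experiments of Section~\ref{sec:numerics}, where the paper itself remarks that its guarantee appears loose. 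So, modulo re-checking the few downstream proof steps that invoke the constant $\T\Omega\e$, your sharper version would partially close the very gap between theory and experiment that the paper points out.
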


We are now prepared to prove Theorem~\ref{theorem:US_FBP}.
To this end, for $k \in \Z$ we consider
\begin{equation*}
y_\theta^\lambda\sqb{k} = p_\theta^\lambda(k\T) = \Modulo_\theta^\lambda f(k\T), \qquad
y_\theta\sqb{k} = p_\theta(k\T) = \Radon_\theta f(k\T)
\end{equation*}
and set
\begin{equation*}
\varepsilon_\theta\sqb{k} = y_\theta\sqb{k} - y_\theta^\lambda\sqb{k} \in 2\lambda\Z.
\end{equation*}
Let us recall that for any sequence $a \in \R^\Z$ we have $\Sum(\Delta a) = a - a[0]$.
In particular, if $a \in \R^\Z$ is a null sequence, we obtain
\begin{equation*}
a[0] = -\lim_{k \to \infty} \Sum(\Delta a)\sqb{k}
\end{equation*}
and, thus, any $a \in c_0(\Z)$ can be uniquely recovered from $\Delta a$ via
\begin{equation*}
a = \Sum(\Delta a) - \lim_{k \to \infty} \Sum(\Delta a)\sqb{k}.
\end{equation*}

\begin{proof}[Proof of Theorem~\ref{theorem:US_FBP}]
First note that due to the choice of $\beta_f \in 2\lambda \Z$ we have
\begin{equation*}
\|\Radon_\theta f\|_\infty \leq \beta_f
\quad \forall \, \theta \in [0,\pi)
\end{equation*}
and the assumption $\T < \nicefrac{1}{\Omega\e}$ gives
\begin{equation*}
N = \left\lceil\frac{\log(\lambda) - \log(\beta_f)}{\log(\T\Omega\e)}\right\rceil \geq \frac{\log(\nicefrac{\lambda}{\beta_f})}{\log(\T \Omega \e)}
\quad \implies \quad
(\T \Omega \e)^N \leq \frac{\lambda}{\beta_f}.
\end{equation*}
As $f \in \Bernstein_\Omega^1(\R^2)$ implies $\Radon_\theta f \in \PW_\Omega$, we can apply Lemma~\ref{lemma:difference_norm} to obtain
\begin{equation}
\label{eq:ybound}
\|\Delta^N y_\theta\|_\infty \leq (\T\Omega\e)^N \, \|\Radon_\theta f\|_\infty \leq \lambda.
\end{equation}
and, consequently,
\begin{equation*}
\Delta^N y_\theta = \Mod_\lambda\bigl(\Delta^N y_\theta\bigr) = \Mod_\lambda\bigl(\Delta^N y_\theta - \Delta^N \varepsilon_\theta\bigr) = \Mod_\lambda\bigl(\Delta^N y_\theta^\lambda\bigr),
\end{equation*}
since $\Delta^N \varepsilon_\theta\sqb{k} \in 2\lambda\Z$ for all $k \in \Z$.
Hence, $\Delta^N \varepsilon_\theta$ can be computed from given MRT samples $y_\theta^\lambda$ via
\begin{equation*}
\Delta^N \varepsilon_\theta = \Mod_\lambda\bigl(\Delta^N y_\theta^\lambda\bigr) - \Delta^N y_\theta^\lambda.
\end{equation*}
We now prove by induction in $m \in \{0,\ldots,N-1\}$ that $s_{(m)} = \Delta^{N-m} \varepsilon_\theta$.
The induction seed reduces to the definition of $s_{(0)} = \Delta^N \varepsilon_\theta$.
For the induction step, we assume that for fixed $m$ we have $s_{(m)} = \Delta^{N-m} \varepsilon_\theta$.
Recall that $\varepsilon_\theta\sqb{k} \in 2\lambda\Z$ so that $\bigl(\Delta^j \varepsilon_\theta\bigr)\sqb{k} \in 2\lambda\Z$ for all $k \in \Z$ and $j \in \N$.
Consequently, applying the anti-difference operator $\Sum$ yields
\begin{equation*}
\Delta^{N-m-1} \varepsilon_\theta = \Sum \Delta^{N-m} \varepsilon_\theta + 2\lambda \kappa
\end{equation*}
with $\kappa \in \Z$.
Since $(\Sum \Delta^{N-m} \varepsilon_\theta\bigr)\sqb{k} \in 2\lambda\Z$, rounding to the nearest multiple of $2\lambda$ has no effect, i.e.,
\begin{equation*}
\Sum \Delta^{N-m} \varepsilon_\theta = 2\lambda \left\lceil\frac{\left\lfloor\nicefrac{\Sum \Delta^{N-m} \varepsilon_\theta}{\lambda}\right\rfloor}{2}\right\rceil,
\end{equation*}
and it remains to identify the ambiguity $\kappa$ as
\begin{equation*}
\kappa = \left\lfloor\frac{\bigl(\Sum^2 \Delta^{N-m} \varepsilon_\theta\bigr)[1] - \bigl(\Sum^2 \Delta^{N-m} \varepsilon_\theta\bigr)[J+1]}{12 \beta_f} + \frac{1}{2}\right\rfloor.
\end{equation*}
To this end, we again apply the anti-difference operator $\Sum$ and obtain, with some $\mu \in \Z$,
\begin{equation*}
\bigl(\Delta^{N-m-2} \varepsilon_\theta\bigr)\sqb{k} = \bigl(\Sum \Delta^{N-m-1} \varepsilon_\theta\bigr)\sqb{k} + 2\lambda \mu = \bigl(\Sum^2 \Delta^{N-m} \varepsilon_\theta\bigr)\sqb{k} + 2\lambda\kappa k + 2\lambda\mu
\end{equation*}
so that
\begin{equation*}
\zeta^{(m)}_J = \bigl(\Sum^2 \Delta^{N-m} \varepsilon_\theta\bigr)[1] - \bigl(\Sum^2 \Delta^{N-m} \varepsilon_\theta\bigr)[J+1] = \bigl(\Delta^{N-m-2} \varepsilon_\theta\bigr)[1] - \bigl(\Delta^{N-m-2} \varepsilon_\theta\bigr)[J+1] + 2\lambda\kappa J.
\end{equation*}
As $\|y_\theta^\lambda\|_\infty \leq \lambda$ gives $\|\Delta^{N-m-2} y_\theta^\lambda\|_\infty \leq 2^{N-m-2} \lambda$ and $\|\Delta^{N-m-2} y_\theta\|_\infty \leq (\T\Omega\e)^{N-m-2} \beta_f \leq \beta_f$ by applying Lemma~\ref{lemma:difference_norm} with $\T\Omega\e < 1$, we obtain, using $\Delta^{N-m-2} \varepsilon_\theta = \Delta^{N-m-2} y_\theta - \Delta^{N-m-2} y_\theta^\lambda$,
\begin{equation*}
\zeta^{(m)}_J \in \left[2\lambda\kappa J - (2\beta_f + 2^{N-m-1} \, \lambda), 2\lambda\kappa J + (2\beta_f + 2^{N-m-1} \, \lambda)\right].
\end{equation*}
The choices of $\T$ and $N$ ensure that
\begin{equation*}
2^{N-1} \leq \left(\frac{\beta_f}{\lambda}\right)^{-\frac{1}{\log(\T\Omega\e)}} \leq \frac{\beta_f}{\lambda},
\end{equation*}
which implies
\begin{equation*}
\zeta^{(m)}_J \in 2\lambda J \left[\kappa-\frac{3\beta_f}{2\lambda},\kappa+\frac{3\beta_f}{2\lambda}\right] 
\quad \iff \quad
\kappa \in \frac{1}{2\lambda J} \left[\zeta^{(m)}_J - 3\beta_f, \zeta^{(m)}_J + 3\beta_f\right].
\end{equation*}
As $J = \frac{6\beta_f}{\lambda}$ and $\kappa \in \N$, we can conclude
\begin{equation*}
\kappa \in \left[\frac{\zeta^{(m)}_J}{12\beta_f} - \frac{1}{4}, \frac{\zeta^{(m)}_J}{12\beta_f} + \frac{1}{4}\right]
\quad \implies \quad
\kappa = \left\lfloor\frac{\zeta^{(m)}_J}{12\beta_f} + \frac{1}{2}\right\rfloor
\end{equation*}
so that indeed $\Delta^{N-m-1} \varepsilon_\theta = s_{(m+1)}$.
In particular, the choice $m = N-1$ then gives
\begin{equation*}
\Sum s_{(N-1)} = \Sum \Delta \varepsilon_\theta = \varepsilon_\theta + 2\lambda\nu
\end{equation*}
with some $\nu \in \Z$.
Recall that $\Radon_\theta f \in \PW_\Omega$ implies
\begin{equation*}
\|\Radon_\theta f\|_{\Lebesgue^2(\R)}^2 = \T \sum_{k \in \Z} |y_\theta\sqb{k}|^2 < \infty
\end{equation*}
so that $y_\theta \in c_0(\Z)$ is a null sequence.
This, in turn, implies that $\varepsilon_\theta$ forms a null sequence as well and, hence,
\begin{equation*}
\lim_{j \to \infty} \bigl(\Sum s_{(N-1)}\bigr)[j] = 2\lambda\nu.
\end{equation*}
Using the Modulo Decomposition~\eqref{eq:modulo_decomposition}, we can recover the Radon samples $y_\theta\sqb{k}$ for any $k \in \Z$ via
\begin{equation*}
y_\theta\sqb{k} = y_\theta^\lambda\sqb{k} + \varepsilon_\theta\sqb{k} = y_\theta^\lambda\sqb{k} + \bigl(\Sum s_{(N-1)}\bigr)\sqb{k} - \lim_{j \to \infty} \bigl(\Sum s_{(N-1)}\bigr)[j]
\end{equation*}
and, thus, we can reconstruct $\Radon_\theta f \in \PW_\Omega$ by applying the Shannon sampling theorem, i.e.,
\begin{equation*}
\Radon_\theta f = \sum_{k \in \Z} y_\theta\sqb{k} \sinc\left(\frac{\pi}{\T}(\cdot - k\T)\right) = p_\theta.
\end{equation*}
In this way we get access to $\Radon_\theta f$ for all $\theta \in [0,\pi)$ and, using the evenness of $\Radon f$, also for $[\pi,2\pi)$. As $f \in \Bernstein_\Omega^1(\R^2)$ is $\Omega$-band-limited and we chose the Ram-Lak filter with the same bandwidth, i.e.,
\begin{equation*}
\Fourier_1 F_\Omega(S) = |S| \, \ind_{[-\Omega,\Omega]}(S),
\end{equation*}
we can apply the FBP formula~\eqref{eq:FBP} to exactly recover $f$ via
\begin{equation*}
f = \frac{1}{2} \Radon^\# (F_\Omega * \Radon_\theta f) = \frac{1}{2} \Radon^\# (F_\Omega * p_\theta) = f_\Omega^\lambda
\end{equation*}
and the proof is complete.
\end{proof}

As discussed in Section~\ref{sec:approach}, the bandwidth assumption underlying our theory can arise in two ways.
In some cases, e.g.~in seismic imaging, it naturally comes up as part of the model, in other cases, such as X-ray tomography, a model class of compactly supported functions is assumed, and the band-limitation is enforced by the anti-aliasing filtering step.
We will now focus on the latter scenario and describe the filtering step in more detail for a {\em not necessarily} band-limited function $f \in \Lebesgue^1(\R^2)$.
\smallskip
\begin{enumerate}[label=(\roman*)]
\item
\label{itm:pre_filter}
For fixed $\theta \in [0,\pi)$ we pre-filter $\Radon_\theta f$ with the ideal low-pass filter $\Phi_\Omega \in \PW_\Omega$ satisfying
\begin{equation}
\label{eq:pre_filter}
\Fourier_1 \Phi_\Omega(S) = \ind_{[-\Omega,\Omega]}(S).
\end{equation}
The resulting band-limited Radon projection $p_\theta \in \PW_\Omega$ takes the form
\begin{equation}
\label{eq:Radon_projection}
p_\theta(t) = \int_\R \Radon_\theta f(S) \, \Phi_\Omega(t-S) \: \d S.
\end{equation}
\item The Radon Projection $p_\theta$ is folded in the range $[-\lambda,\lambda]$ via the modulo operator $\Mod_\lambda$ resulting in the Modulo Radon projection
\begin{equation}
\label{eq:Modulo_projection}
p_\theta^\lambda(t) = \Mod_\lambda(p_\theta(t)).
\end{equation}
\item The Modulo Radon Projection $p_\theta^\lambda$ is sampled with rate $\T$ yielding, for $m \in \Z$,
\begin{equation*}
y_\theta^\lambda[m] = p_\theta^\lambda(m\T) = \Mod_\lambda(p_\theta(m\T)).
\end{equation*}
\end{enumerate}
To approximately recover $f \in \Lebesgue^1(\R^2)$ from its given semi-discrete Modulo Radon Projections
\begin{equation}
\label{eq:MRT_projections_discrete}
\bigl\{p_\theta^\lambda(m\T) \bigm| \theta \in [0,\pi), ~ m \in \Z\bigr\},
\end{equation}
we apply Algorithm~\ref{algo:US_FBP} to the samples $y_\theta^\lambda\sqb{k} = p_\theta^\lambda(k\T)$ yielding the US-FBP reconstruction~$f_\Omega^\lambda$.
Note that, if $f \in \Bernstein_\Omega^1(\R^2)$ is itself band-limited with bandwidth $\Omega$, we have
\begin{equation*}
p_\theta = \Radon_\theta f
\quad \mbox{ and } \quad
p_\theta^\lambda = \Modulo_\theta^\lambda f.
\end{equation*}

\subsubsection*{A Note on US-FBP Error Bounds}

If we use a reconstruction filter $F_\Omega$ satisfying~\eqref{eq:recon_filter}, we observe that
\begin{equation*}
f_\Omega = \frac{1}{2} \Radon^\# \left(F_\Omega * \Radon_\theta f\right) = \frac{1}{2} \Radon^\# \left(F_\Omega * p_\theta\right) = f_\Omega^\lambda
\end{equation*}
and, thus, existing error estimates~\cite{Beckmann2019, Beckmann2019a, Beckmann2020a, Beckmann2021} for the FBP approximation error $f - f_\Omega$ carry over to the US-FBP error $f - f_\Omega^\lambda$.
For illustration, we apply~\cite[Theorem~5.5]{Beckmann2019} and~\cite[Theorem~3]{Beckmann2019a} to obtain error estimates in Sobolev spaces of fractional order $\alpha > 0$, given by
\begin{equation*}
\Sobolev^\alpha(\R^2) = \bigl\{f \in \Lebesgue^2(\R^2) \bigm| \|f\|_\alpha < \infty\bigr\},
\end{equation*}
where
\begin{equation*}
\|f\|_\alpha^2 = \frac{1}{4\pi^2} \int_{\R^2} (1 + \|\bfx\|_2^2)^\alpha \, |\Fourier_2 f(\bfx)|^2 \: \d \bfx.
\end{equation*}

\begin{theorem}
\label{theorem:US_FBP_error}
For $\alpha > 0$, let $f \in \Lebesgue^1(\R^2) \cap \Sobolev^\alpha(\R^2)$ and, for $\lambda > 0$, let $f_\Omega^\lambda$ denote the \mbox{US-FBP} reconstruction from semi-discrete Modulo Radon Projections~\eqref{eq:MRT_projections_discrete} with sampling rate $\T < \nicefrac{1}{\Omega\e}$.
If the window function $W$ satisfies $W \in \Cont([-1,1])$ and $W(S) = 1$ for $S \in [-c_W,c_W]$ with $c_W > 0$, then the $\Lebesgue^2$-norm of the US-FBP error is bounded above by
\begin{equation*}
\|f - f_\Omega^\lambda\|_{\Lebesgue^2(\R^2)} \leq \left(c_W^{-\alpha} \, \|1 - W\|_\infty + 1\right) \Omega^{-\alpha} \, \|f\|_\alpha.
\end{equation*}
Alternatively, if $W \in \Cont^{k-1,\nu}([-1,1])$ for $k \in \N$, $\nu \in (0,1]$ and $W(0) = 1$ as well as $W^{(j)}(0) = 0$ for all $1 \leq j \leq k-1$, then the $\Lebesgue^2$-norm of the US-FBP error is bounded by
\begin{equation*}
\|f - f_\Omega^\lambda\|_{\Lebesgue^2(\R^2)} \leq c_{\alpha,W} \, \Omega^{-\min\{k-1+\nu,\alpha\}} \, \|f\|_\alpha
\end{equation*}
with an explicitly known constant $c_{\alpha,W} > 0$.
\end{theorem}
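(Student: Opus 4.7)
The plan is to reduce Theorem~\ref{theorem:US_FBP_error} to the classical FBP error analysis by showing that the Unlimited Sampling stage of Algorithm~\ref{algo:US_FBP} recovers the pre-filtered Radon projections $p_\theta$ exactly. Since $p_\theta \in \PW_\Omega$ by construction in~\eqref{eq:pre_filter}--\eqref{eq:Radon_projection} and $\T < \nicefrac{1}{\Omega\e}$ by hypothesis, the inductive argument from the proof of Theorem~\ref{theorem:US_FBP} applies verbatim with $\Radon_\theta f$ replaced by $p_\theta$, once a uniform upper bound $\beta_f \in 2\lambda\Z$ with $\beta_f \geq \|p_\theta\|_\infty$ for all $\theta \in [0,\pi)$ is fixed. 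Such a bound is available because $p_\theta = \Radon_\theta f \ast \Phi_\Omega$ and both factors are globally controlled in terms of $\|f\|_{\Lebesgue^1(\R^2)}$ and $\|\Phi_\Omega\|_{\Lebesgue^\infty(\R)}$, uniformly over all projection angles.

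With exact US recovery, the Shannon interpolation step in Algorithm~\ref{algo:US_FBP} returns $p_\theta$ itself for each $\theta$, so the final FBP step outputs
\begin{equation*}
f_\Omega^\lambda \;=\; \tfrac{1}{2}\,\Radon^\#\bigl(F_\Omega \ast p_\theta\bigr),
\end{equation*}
which coincides with the continuous-domain FBP reconstruction $f_\Omega$ of $f$ using the chosen reconstruction filter $F_\Omega$. Consequently $\|f - f_\Omega^\lambda\|_{\Lebesgue^2(\R^2)} = \|f - f_\Omega\|_{\Lebesgue^2(\R^2)}$, and the task reduces to bounding the classical FBP approximation error in a fractional Sobolev norm.

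For the first assertion I would invoke the window-based estimate of~\cite[Theorem~5.5]{Beckmann2019}, whose hypotheses (continuous $W$ on $[-1,1]$ equal to $1$ on $[-c_W,c_W]$) match the first case and whose conclusion yields precisely the factor $c_W^{-\alpha}\|1-W\|_\infty + 1$ multiplying $\Omega^{-\alpha}\|f\|_\alpha$. For the second assertion I would invoke~\cite[Theorem~3]{Beckmann2019a}, whose H\"older regularity and vanishing-derivative conditions on $W$ at the origin produce the stated convergence rate $\Omega^{-\min\{k-1+\nu,\alpha\}}$ together with the explicit constant $c_{\alpha,W}$. In both citations, the identification $f_\Omega^\lambda = f_\Omega$ is what transports the classical FBP error estimate to the modulo setting.

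The only substantive obstacle is the clean verification that the US decoding succeeds uniformly in $\theta$: one must certify the sampling-rate condition $\T < \nicefrac{1}{\Omega\e}$, the band-limitation $p_\theta \in \PW_\Omega$, and a common $2\lambda\Z$-valued sup-norm bound $\beta_f$ across all angles, so that the ambiguity-resolution step in Algorithm~\ref{algo:US_FBP} is valid at every projection angle simultaneously. Once this uniformity is in place, the statement follows by direct appeal to the existing FBP convergence results.
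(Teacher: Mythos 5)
Your proposal is correct and follows essentially the same route as the paper: the paper's own (very terse) argument is precisely the observation that, because the reconstruction filter $F_\Omega$ is band-limited, $f_\Omega = \tfrac{1}{2}\Radon^\#(F_\Omega * \Radon_\theta f) = \tfrac{1}{2}\Radon^\#(F_\Omega * p_\theta) = f_\Omega^\lambda$ once the Unlimited Sampling stage has exactly recovered $p_\theta$, after which the bounds are imported verbatim from \cite[Theorem~5.5]{Beckmann2019} and \cite[Theorem~3]{Beckmann2019a}. Your additional care in certifying the uniform sup-norm bound and the exactness of the US decoding across all angles only makes explicit what the paper leaves implicit.
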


\subsection{Approximate Recovery from Finitely Many MRT Samples}
\label{sec:reconstruction_finite}

In practice only finitely many samples can be taken and, hence, our reconstruction procedure needs to be adjusted.
We again focus on the case of a compactly supported target function $f$ whose Radon transform is subjected to the pre-filtering step~\ref{itm:pre_filter}.
More precisely, we assume that $f$ is supported in $B_1(0) \subset \R^2$, i.e.,
\begin{equation*}
f(\bfx) = 0
\quad \forall \, \|\bfx\|_2 > 1.
\end{equation*}
In this case, the Radon projection $p_\theta \in \PW_\Omega$ in~\eqref{eq:Radon_projection} produced in the pre-filtering step~\ref{itm:pre_filter} differs from $\Radon_\theta f$ and will no longer have  compact support, but will decay at infinity in the sense that for any $c > 0$ there exists $t_c > 0$ such that
\begin{equation*}
|p_\theta(t)| < c
\quad \forall \, |t| > t_c.
\end{equation*}

Moreover, we assume that we are given {\em fully discrete} Modulo Radon Projections
\begin{equation}
\label{eq:MRT_samples_fully_discrete}
\bigl\{p_{\theta_m}^\lambda(t_k) \bigm| -K' \leq k \leq K, ~ 0 \leq m \leq M-1\bigr\}
\end{equation}
in parallel beam geometry with
\begin{equation*}
t_k = k \, \T
\quad \mbox{ and } \quad
\theta_m = m \, \frac{\pi}{M},
\end{equation*}
where $\T > 0$ is the spacing of $K'+K+1$ parallel lines per angle.
To adapt our reconstruction algorithm to this fully discrete scenario, we again follow a sequential reconstruction approach.
First, we devise an Unlimited Sampling (US) algorithm for essentially compactly supported band-limited functions (more precisely, for the closely related class of functions with compact $\lambda$-exceedance, cf.~Definition~\ref{def:lambda_support}), see Algorithm~\ref{algo:UL_lambda}, and apply it to recover the Radon projection $p_{\theta_m}$ from $p_{\theta_m}^\lambda$ for each angle $\theta_m$.
In a second step, we then reconstruct $f$ from the recovered Radon projections
\begin{equation*}
\bigl\{p_{\theta_m}(t_k) \bigm| -K' \leq k \leq K, ~ 0 \leq m \leq M-1\bigr\}
\end{equation*}
by applying a discrete version of the approximate FBP formula~\eqref{eq:FBP}, see Algorithm~\ref{algo:FBP}.

\bigskip

Both stages of our proposed method will be described in detail in the upcoming paragraphs.

\subsubsection*{Unlimited Sampling for Functions of Compact $\lambda$-Exceedance}

The goal of this section is to provide an algorithm that is guaranteed to recover finitely many samples $\gamma\sqb{k} = g(k\T)$ of a function $g$ with sampling rate $\T > 0$, given the modulo samples $y\sqb{k} = \Mod_\lambda(g(k\T))$.
Motivated by the considerations of the previous paragraph, we will consider $g \in \PW_\Omega$, i.e.,  $\Omega$-band-limited functions, which  in addition exceed the modulo threshold only in a compact region (this is a special case of the decay property motivated above), cf.~the following definition.

\begin{definition}[$\lambda$-exceedance property]
\label{def:lambda_support}
Let $\lambda > 0$.
We say that a univariate function $g: \R \longrightarrow \R$ is of {\em compact $\lambda$-exceedance with parameter $\rho > 0$} if $|g(t)| < \lambda$ for $|t| > \rho$.
We then write~$g \in \Band^\lambda_\rho$.
\end{definition}

Our recovery strategy involves the forward difference operator $\Delta: \R^{K'+K+1} \longrightarrow \R^{K'+K}$ with
\begin{align*}
(\Delta a)\sqb{k} = a[k+1]-a\sqb{k}
\quad \mbox{ for } -K' \leq k < K
\end{align*}
and the anti-difference operator $\Sum: \R^{K'+K} \longrightarrow \R^{K'+K+1}$ with
\begin{equation*}
(\Sum a')\sqb{k} = \sum_{j=-K'}^{k-1} a'[j]
\quad \mbox{ for } -K' \leq k \leq K
\end{equation*}
so that
\begin{equation*}
\Sum(\Delta a) = a - a\sqb{-K'}
\quad \forall \, a = (a\sqb{k})_{k=-K'}^K \in \R^{K'+K+1}.
\end{equation*}
Furthermore, we again use the modulo decomposition from Proposition~\ref{proposition:modulo_decomposition},
\begin{equation*}
g(t) = \Mod_\lambda(g(t)) + \varepsilon_g(t),
\end{equation*}
where $\varepsilon_g$ is a piecewise constant function with values in $2\lambda\Z$.
For the sake of brevity, we set
\begin{equation*}
\varepsilon_\gamma\sqb{k} = \varepsilon_g(k\T) = \gamma\sqb{k} - y\sqb{k}
\quad \mbox{ for } -K' \leq k \leq K.
\end{equation*}

\begin{algorithm}[t]
\caption{Unlimited Sampling for functions of compact $\lambda$-exceedance}
\label{algo:UL_lambda}
\begin{algorithmic}[1]
\REQUIRE modulo samples $y\sqb{k} = \Mod_\lambda(g(k\T))$ for $k=-K',\ldots,K$ and upper bound $\beta_g \geq \|g\|_\infty$
\medskip
\STATE \textbf{choose} $N = \left\lceil\frac{\log(\lambda) - \log(\beta_g)}{\log(\T\Omega\e)}\right\rceil_+$
\smallskip
\STATE $s_{(0)}\sqb{k} = (\Delta^N \varepsilon_\gamma)\sqb{k} = (\Mod_\lambda(\Delta^N y) - \Delta^N y)\sqb{k}$
\FOR{$n=0,\ldots,N-2$}
\STATE $s_{(n+1)}\sqb{k} = 2\lambda \left\lceil\frac{\lfloor\nicefrac{\Sum s_{(n)}\sqb{k}}{\lambda}\rfloor}{2}\right\rceil$
\ENDFOR
\STATE $\gamma\sqb{k} = y\sqb{k} + (\Sum s_{(N-1)})\sqb{k}$
\medskip
\ENSURE samples $\gamma\sqb{k} = g(k\T)$ for $k=-K,\ldots,K$
\end{algorithmic}
\end{algorithm}

\begin{theorem}[Sampling theorem for functions of compact $\lambda$-exceedance]
\label{theo:UL_lambda}
Let $g \in \PW_\Omega \cap \Band_\rho^\lambda$ be a band-limited function of compact $\lambda$-exceedance with parameter $\rho$.
Let $\beta_g > 0$ be given with $\|g\|_\infty \leq \beta_g$.
Then, a sufficient condition for the {\em exact} recovery of the samples $\gamma\sqb{k} = g(k\T)$, $-K \leq k \leq K$, from given modulo samples $y\sqb{k} = \Mod_\lambda(g(k\T))$, $-K' \leq k \leq K$, by means of Algorithm~\ref{algo:UL_lambda} is given by
\begin{equation*}
T < \frac{1}{\Omega\e}
\quad \mbox{ and } \quad
K' \geq \max\{K, \rho \T^{-1}+N\},
\end{equation*}
where
\begin{equation*}
N = \left\lceil\frac{\log(\lambda) - \log(\beta_g)}{\log(\T\Omega\e)}\right\rceil_+.
\end{equation*}
\end{theorem}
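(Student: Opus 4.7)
The plan is to rerun the structure of the proof of Theorem~\ref{theorem:US_FBP}, adapted to the finite setting and considerably simplified because the compact $\lambda$-exceedance at the left boundary supplies the information that was previously extracted via the $\kappa_n$ correction step. The crucial preliminary observation is that the hypothesis $K' \geq \rho\T^{-1}+N$ together with $g \in \Band_\rho^\lambda$ implies $|(-K'+i)\T| > \rho$ and hence $|g((-K'+i)\T)| < \lambda$ for $i = 0,\ldots,N-1$, so that $\Mod_\lambda$ acts trivially at these leftmost $N$ nodes and $\varepsilon_\gamma\sqb{-K'+i} = \gamma\sqb{-K'+i} - y\sqb{-K'+i} = 0$ for $i = 0,\ldots,N-1$. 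By the definition of the forward difference, this cascade of zeros propagates to give $(\Delta^j \varepsilon_\gamma)\sqb{-K'} = 0$ for every $0 \leq j \leq N-1$.

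With these boundary vanishings in hand, I would next initialize the recursion exactly as in Theorem~\ref{theorem:US_FBP}. The choice of $N$ together with the oversampling hypothesis $\T < (\Omega\e)^{-1}$ yields $(\T\Omega\e)^N \leq \lambda/\beta_g$, and applying Lemma~\ref{lemma:difference_norm} to $g \in \PW_\Omega$ gives $\|\Delta^N \gamma\|_\infty \leq \lambda$, whence $\Mod_\lambda(\Delta^N \gamma) = \Delta^N \gamma$. Combined with the fact that $\Delta^N \varepsilon_\gamma$ is entrywise in $2\lambda\Z$ (a consequence of $\varepsilon_\gamma\sqb{k} \in 2\lambda\Z$), this produces $\Mod_\lambda(\Delta^N y) = \Delta^N \gamma$ and therefore $s_{(0)} = \Mod_\lambda(\Delta^N y) - \Delta^N y = \Delta^N \varepsilon_\gamma$.

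The induction step is where the boundary data actually enters. Assuming $s_{(m)} = \Delta^{N-m} \varepsilon_\gamma$ for some $0 \leq m \leq N-2$, the telescoping identity $\Sum(\Delta a) = a - a\sqb{-K'}$ applied to $a = \Delta^{N-m-1} \varepsilon_\gamma$ yields $\Sum s_{(m)} = \Delta^{N-m-1} \varepsilon_\gamma - (\Delta^{N-m-1} \varepsilon_\gamma)\sqb{-K'}$, and the boundary term vanishes by the first paragraph since $N-m-1 \leq N-1$. Thus $\Sum s_{(m)} = \Delta^{N-m-1}\varepsilon_\gamma$ takes values in $2\lambda\Z$, on which the composite rounding $2\lambda\lceil \lfloor(\cdot)/\lambda\rfloor/2\rceil$ is the identity, closing the induction with $s_{(m+1)} = \Delta^{N-m-1} \varepsilon_\gamma$. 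At $m = N-1$, one more anti-difference gives $\Sum s_{(N-1)} = \varepsilon_\gamma - \varepsilon_\gamma\sqb{-K'} = \varepsilon_\gamma$, so that $y\sqb{k} + (\Sum s_{(N-1)})\sqb{k} = \gamma\sqb{k} = g(k\T)$ for every $-K' \leq k \leq K$, and in particular on the target range $-K \leq k \leq K$, which is contained therein because $K' \geq K$.

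The main obstacle is careful bookkeeping: verifying that the index ranges shrink and expand consistently under the repeated $\Delta$ and $\Sum$ operations, and confirming that the $\lambda$-exceedance window of width $N$ secured by the sample budget $K' \geq \rho\T^{-1}+N$ is exactly the depth needed to annihilate the boundary terms at every induction level. Once this alignment is established, the rest is a streamlined rerun of the semi-discrete proof, with the compact $\lambda$-exceedance taking the place of the null-sequence tail argument that was previously used to fix the integer ambiguities via the $\kappa_n$ correction.
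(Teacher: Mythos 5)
Your proposal is correct and follows essentially the same route as the paper's proof: establish the boundary vanishing $(\Delta^j\varepsilon_\gamma)\sqb{-K'}=0$ from the compact $\lambda$-exceedance and the sample budget $K'\geq \rho\T^{-1}+N$, initialize via Lemma~\ref{lemma:difference_norm} to get $s_{(0)}=\Delta^N\varepsilon_\gamma$, and run the same induction with the telescoping identity and the fact that the rounding map is the identity on $2\lambda\Z$-valued sequences. The only (inessential) difference is that the paper separately dispatches the trivial case $\beta_g\leq\lambda$, where $N=0$ and no unfolding is needed.
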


\begin{proof}
If $\beta_g \leq \lambda$, we have $y\sqb{k}=\Mod_\lambda(g(k\T)) = g(k\T) = \gamma\sqb{k}$ and the statement is trivially true.
Thus, we only have to consider the case $\beta_g > \lambda$.
By the choice of $\T$ and $N$ we have
\begin{equation*}
N = \left\lceil\frac{\log(\lambda) - \log(\beta_g)}{\log(\T\Omega\e)}\right\rceil_+ \geq \frac{\log\bigl(\frac{\lambda}{\beta_g}\bigr)}{\log(\T\Omega\e)}
\quad \mbox{ and } \quad
\log(\T\Omega\e) \leq -1.
\end{equation*}
This implies
\begin{equation*}
(\T\Omega\e)^N \leq \frac{\lambda}{\beta_g}
\end{equation*}
and, consequently, Lemma~\ref{lemma:difference_norm} ensures that
\begin{equation*}
\|\Delta^N \gamma\|_\infty \leq \frac{\lambda}{\beta_g} \, \|g\|_\infty \leq \lambda,
\end{equation*}
which in turn gives
\begin{equation*}
\Delta^N \gamma = \Mod_\lambda(\Delta^N \gamma) = \Mod_\lambda(\Delta^N y).
\end{equation*}
Thus, for $\varepsilon_\gamma\sqb{k} = \varepsilon_g(k\T) = \gamma\sqb{k} - y\sqb{k}$ follows that
\begin{equation*}
\Delta^N \varepsilon_\gamma = \Delta^N(\gamma-y) = \Delta^N \gamma - \Delta^N y = \Mod_\lambda(\Delta^N y) - \Delta^N y
\end{equation*}
and $(\Delta^N \varepsilon_\lambda)\sqb{k}$, $-K' \leq k \leq K$, can be computed from the modulo samples $y\sqb{k}$, $-K' \leq k \leq K$.
Since $g$ has compact $\lambda$-exceedance with parameter $\rho$ and $K'$ satisfies $K' \geq \rho \T^{-1} + N$, we have
\begin{equation*}
\varepsilon_\gamma[-K'] = g(-K'\T) - \underbrace{\Mod_\lambda(g(-K'\T))}_{= g(-K'\T)} = 0
\end{equation*}
and, for all $1 \leq n \leq N$,
\begin{align*}
(\Delta^n \varepsilon_\gamma)[-K'] & = \sum_{m=0}^n \binom{n}{m} \, (-1)^{n-m} \, \varepsilon_\gamma[-K'+m] \\
& = \sum_{m=0}^n \binom{n}{m} \, (-1)^{n-m} \, (g((-K'+m)\T) - \underbrace{\Mod_\lambda(g((-K'+m)\T)))}_{= g((-K'+m)\T)} = 0.
\end{align*}
With this, we show by induction in $j \in \{0,\ldots,N-1\}$ that $s_{(j)} = \Delta^{N-j} \varepsilon_\gamma \in \R^{K'+K+1-N+j}$.
The induction seed $j = 0$ reduces to the definition of $s_{(0)} = \Delta^N \varepsilon_\gamma$.
For the induction step, assume that the induction hypothesis holds for $j \in \{0,\ldots,N-2\}$, i.e.,
\begin{equation*}
s_{(j)} = \Delta^{N-j} \varepsilon_\gamma = \Delta(\Delta^{N-(j+1)} \varepsilon_\gamma).
\end{equation*}
Then, applying the anti-difference operator $\Sum$ gives
\begin{equation*}
\Sum s_{(j)} = \Sum \bigl(\Delta(\Delta^{N-(j+1)} \varepsilon_\gamma)\bigr) = \Delta^{N-(j+1)} \varepsilon_\gamma - \underbrace{\Delta^{N-(j+1)} \varepsilon_\gamma[-K']}_{=0} = \Delta^{N-(j+1)} \varepsilon_\gamma.
\end{equation*}
In particular, we have $(\Sum s_{(j)})\sqb{k} \in 2\lambda\Z$ and indeed obtain
\begin{equation*}
s_{(j+1)} = 2\lambda \left\lceil\frac{\lfloor\nicefrac{\Sum s_{(j)}}{\lambda}\rfloor}{2}\right\rceil = \Sum s_{(j)} = \Delta^{N-(j+1)} \varepsilon_\gamma.
\end{equation*}
Choosing $j = N-1$ yields $s_{(N-1)} = \Delta \varepsilon_\gamma$ and, consequently,
\begin{equation*}
\Sum s_{(N-1)} = \Sum(\Delta \varepsilon_\gamma) = \varepsilon_\gamma - \underbrace{\varepsilon_\gamma[-K']}_{=0} = \varepsilon_\gamma.
\end{equation*}
This in combination with the modular decomposition property ensures that
\begin{equation*}
\gamma\sqb{k} = y\sqb{k} + (\Sum s_{(N-1)})\sqb{k}
\end{equation*}
and Algorithm~\ref{algo:UL_lambda} exactly recovers the samples $\gamma\sqb{k}$ of $g$ from the modulo samples $y\sqb{k}$.
\end{proof}

\subsubsection*{Discrete FBP Reconstruction Formula for Parallel Beam Geometry}

We now address the discretization of the FBP formula~\eqref{eq:FBP} for the approximate reconstruction of a compactly supported function $f \in \Lebesgue^1(\R^2)$ from discrete Radon data in parallel beam geometry
\begin{equation}
\label{eq:RT_samples_finite}
\bigl\{\Radon_{\theta_m} f(t_k) \bigm| -K \leq k \leq K, ~ 0 \leq m \leq M-1\bigr\}
\end{equation}
with $t_k = k \, \T$ and $\theta_m = m \, \frac{\pi}{M}$, where $\T > 0$ is the spacing of $2K+1$ parallel lines per angle.

We follow a standard approach~\cite{Natterer2001} and apply the composite trapezoidal rule to discretize the convolution $\ast$ and back projection $\Radon^\#$.
This leads to the discrete reconstruction formula
\begin{equation*}
f_D(\bfx) = \frac{\T}{2M} \sum_{m=0}^{M-1} \sum_{k = -K}^K F_\Omega(x_1\cos(\theta_m)+x_2\sin(\theta_m) - t_k) \, \Radon_{\theta_m} f(t_k)
\quad \mbox{ for } \bfx \in \R^2,
\end{equation*}
in short,
\begin{equation*}
f_D = \frac{1}{2} \Radon^\#_D \bigl(F_\Omega *_D \Radon f\bigr).
\end{equation*}
Note that the evaluation of the discrete reconstruction $f_D$ requires the computation of the values
\begin{equation*}
(F_\Omega *_D \Radon_{\theta_m} f)(x_1\cos(\theta_m)+x_2\sin(\theta_m))
\quad \forall \, 0 \leq m \leq M-1
\end{equation*}
for each reconstruction point $\bfx = (x_1,x_2) \in \R^2$.
To reduce the computational costs, one typically evaluates the function
\begin{equation*}
h_{\theta_m}(t) = (F_\Omega *_D \Radon_{\theta_m} f)(t)
\quad \mbox{ for } t \in \R
\end{equation*}
only at the points $t = t_i$, $i \in I$, and interpolates the value $h_{\theta_k}(t)$ for $t = x_1\cos(\theta_m)+x_2\sin(\theta_m)$ using linear spline interpolation $\Int_1$.
This leads us to the {\em discrete FBP reconstruction formula}
\begin{equation}
\label{eq:discrete_FBP_formula}
f_\FBP = \frac{1}{2} \Radon^\#_D \bigl(\Int_1[F_\Omega *_D \Radon f]\bigr),
\end{equation}
which is summarized in Algorithm~\ref{algo:FBP}.
According to~\cite[Section 5.1.1]{Natterer2001}, the optimal sampling conditions are given by
\begin{equation*}
\T \leq \frac{\pi}{\Omega}, \quad K \geq \frac{1}{\T}, \quad M \geq \Omega.
\end{equation*}

\begin{algorithm}[t]
\caption{Discrete filtered back projection in parallel beam geometry}
\label{algo:FBP}
\begin{algorithmic}[1]
\REQUIRE Radon samples $\Radon_{\theta_m} f(t_k)$ with radial spacing~$\T$ for $k=-K,\ldots,K$, $m=0,\ldots,M-1$, reconstruction filter $F_\Omega \in \PW_\Omega$
\medskip
\FOR{$m=0,\ldots,M-1$}
\FOR{$i \in I$}
\STATE ${\displaystyle h_{\theta_m}(t_i) = \sum_{k=-K}^K F_\Omega(t_i - t_k) \, \Radon_{\theta_m} f(t_k)}$
\ENDFOR
\ENDFOR
\medskip
\STATE ${\displaystyle f_\FBP(\bfx) = \frac{\T}{2M} \sum_{m=0}^{M-1} \Int_1 h_{\theta_m}(x_1 \cos(\theta_m) + x_2 \sin(\theta_m))}$
\medskip
\ENSURE discrete FBP reconstruction $f_\FBP$
\end{algorithmic}
\end{algorithm}

\subsubsection*{Discrete US-FBP Reconstruction}

Combining Algorithm~\ref{algo:UL_lambda} and Algorithm~\ref{algo:FBP} to a sequential reconstruction scheme allows us to approximately recover a function $f \in \Lebesgue^1(\R^2)$ with $\supp(f) \subseteq B_1(0)$ from its finitely many Modulo Radon Projections
\begin{equation}
\label{eq:MRT_projections_finite}
\bigl\{p_{\theta_m}^\lambda(t_k) \bigm| -K' \leq k \leq K, ~ 0 \leq m \leq M-1\bigr\}
\end{equation}
with
\begin{equation*}
t_k = k \, \T
\quad \mbox{ and } \quad
\theta_m = m \, \frac{\pi}{M}
\end{equation*}
if we choose the sampling parameters
\begin{equation*}
\T < \frac{1}{\Omega\e}, \quad
K' \geq \max\{K, \rho \T^{-1}+N\}, \quad
K \geq \frac{1}{\T}, \quad
M \geq \Omega
\end{equation*}
with
\begin{equation*}
N = \left\lceil\frac{\log(\lambda) - \log(\beta)}{\log(\T\Omega\e)}\right\rceil_+,
\end{equation*}
where the uniform constants $\rho, \beta > 0$ are chosen such that
\begin{equation*}
p_{\theta_m} \in \Band^\lambda_\rho
\quad \mbox{ and } \quad
\|p_{\theta_m}\|_\infty \leq \beta
\end{equation*}
for all $0 \leq m \leq M-1$.
Indeed, applying Algorithm~\ref{algo:UL_lambda} with input data~\eqref{eq:MRT_projections_finite} yields the Radon Projections
\begin{equation}
\label{eq:RT_projections_finite}
\bigl\{p_{\theta_m}(t_k) \bigm| -K \leq k \leq K, ~ 0 \leq m \leq M-1\bigr\}
\end{equation}
and applying Algorithm~\ref{algo:FBP} with input data~\eqref{eq:RT_projections_finite} yields an approximate reconstruction of~$f$, which we call {\em discrete US-FBP reconstruction} and denote by $f^\lambda_\FBP$.

Let us again stress that using the Radon Projections $p_\theta$ in FBP formula~\eqref{eq:FBP} yields the same results as using the true Radon data $\Radon_\theta f$.
This is, because we use a reconstruction filter $F_\Omega \in \PW_\Omega$ satisfying~\eqref{eq:recon_filter} in FBP formula~\eqref{eq:FBP} and the ideal low-pass filter $\Phi_\Omega \in \PW_\Omega$ satisfying~\eqref{eq:pre_filter} in the pre-filtering step~\ref{itm:pre_filter}.

We finally remark that instead of applying Algorithm~\ref{algo:UL_lambda} one could think of adjusting Algorithm~\ref{algo:US_FBP} to the case of finitely many MRT projections.
In this case we would need at least
\begin{equation}
\label{eq:alg31_samples}
\max\{2K+1,J+N\}
\end{equation}
many samples per angle $\theta_m$, where $J = 6\frac{\beta_f}{\lambda}$ is independent of the bandwidth $\Omega$.
In contrast to this, Algorithm~\ref{algo:UL_lambda} needs
\begin{equation}
\label{eq:alg32_samples}
\max\{2K+1,\rho\T^{-1}+K+1+N\}
\end{equation}
many samples per angle $\theta_m$, where the parameter $\rho$ depends on the threshold $\lambda$ and on the bandwidth $\Omega$ due to the pre-filtering step~\ref{itm:pre_filter}.
For a sufficiently large $\Omega > 0$ and small $\lambda > 0$ the increase in sample size due to Algorithm~\ref{algo:UL_lambda} can be significantly smaller than the increase incurred by adjusting Algorithm~\ref{algo:US_FBP}.
To illustrate this, we choose $K$ of minimal order $K \sim \Omega$ and consider the following benchmark scenario of Sobolev functions of fractional order $\alpha > 1$, which has also been studied in~\cite{Rieder2007} in the context of optimal convergence rates for filtered back projection reconstructions from finitely many Radon samples.
Let $f \in \Sobolev^\alpha(\R^2)$ with $\alpha > 1$ and $\supp(f) \subseteq B_1(0)$.
Then, for all $\theta \in [0,\pi)$ we have $\supp(\Radon_\theta f) \subseteq [-1,1]$ and $\Radon_\theta f \in \Sobolev^{\alpha+\nicefrac{1}{2}}(\R)$ so that
\begin{equation*}
|\Fourier(\Radon_\theta f)(\omega)| \sim |\omega|^{-(\alpha+1)}
\quad \mbox{ for } |\omega| \to \infty.
\end{equation*}
Consequently, for the Radon projection $p_\theta \in \PW_\Omega$ follows that
\begin{align*}
\|p_\theta\|_{\Lebesgue^\infty(\R\setminus[-1,1])} & = \|\Radon_\theta f - p_\theta\|_{\Lebesgue^\infty(\R\setminus[-1,1])} \lesssim \|\Fourier(\Radon_\theta f - p_\theta)\|_{\Lebesgue^1(\R)} \\
& \lesssim  \|\Fourier (\Radon_\theta f)\|_{\Lebesgue^1(\R\setminus[-\Omega,\Omega])} \lesssim \int_\Omega^\infty |\omega|^{-(\alpha+1)} \: \d \omega \lesssim \Omega^{-\alpha}
\end{align*}
and, if $\lambda \sim \Omega^{-\alpha}$, we obtain $\rho \approx 1$.
In this scenario, we have
\begin{equation*}
J \sim \lambda^{-1} \sim \Omega^\alpha
\quad \mbox{ and } \quad
\rho\T^{-1}+K+1 \sim \Omega
\end{equation*}
so that by \eqref{eq:alg31_samples} and \eqref{eq:alg32_samples}, the number of samples required per angle by Algorithms~\ref{algo:US_FBP} and \ref{algo:UL_lambda} is of order $\Omega^\alpha$ and $\Omega$, respectively.
Consequently, as $\alpha > 1$, applying Algorithm~\ref{algo:UL_lambda} is more efficient than employing an adjusted version of Algorithm~\ref{algo:US_FBP}.
Note that the above analysis indicated that the advantage of Algorithm~\ref{algo:UL_lambda} is more pronounced for large smoothness parameters $\alpha > 1$.
While this scenario is only a first benchmark and excludes applications in medical imaging, in which case we typically assume $\alpha < \frac{1}{2}$ to allow for discontinuities along smooth curves,
 our numerical experiments with the Shepp-Logan phantom, as reported in the next section, demonstrate the superiority of Algorithm~\ref{algo:UL_lambda} also in the low smoothness regime.
Hence, the above analysis has to be understood as a first step in explaining the advantages of compact $\lambda$-exceedance and calls for a deeper investigation in future work.


\section{Numerical Assessment}
\label{sec:numerics}

The purpose of our numerical experiments is to demonstrate the single shot, HDR reconstruction approach based on the MRT. Starting with the benchmark ``Shepp-Logan Phantom'' that serves as a simulation study, we use the \emph{open source} ``Walnut Dataset'' \cite{Siltanen2015} that includes realistic uncertainties arising from the tomography hardware. This dataset is then used in two different configurations. In the first case, we apply modulo non-linearity in MATLAB and demonstrate our HDR recovery approach. In the second case, we generate the one-dimensional Radon projections included in the Walnut data set as analog signals
and physically acquire modulo samples using our custom designed, prototype modulo ADC. This serves as an experimental example for testing the capability of our approach in a realistic setting and shows that our method can also deal with hardware imperfections. At the same it, the sequential setup clearly indicates that 
existing tomography equipment can be readily augmented with our prototype ADC. 
Our final demonstration highlights an interesting aspect of the recovery algorithms, namely, reconstruction is possible with slower sampling rates than what is specified by Theorem~\ref{theorem:US_FBP}, indicating the potential for developing tighter bounds. 

\subsection{Experimental Demonstration for Synthetic Data}

\subsubsection*{Shepp-Logan Phantom}

In a first set of numerical experiments we use the proposed US-FBP framework to recover the Shepp-Logan phantom~\cite{Shepp1974} on a grid of $256 \times 256$ pixels from finitely many Modulo Radon Projections~\eqref{eq:MRT_projections_finite}.
The results are summarized in Figure~\ref{fig:US-FBP_Shepp-Logan}, where we use the parameters
\begin{equation*}
\Omega = 300, \quad \T = \frac{1}{2\Omega\e}, \quad K = \Bigl\lceil\frac{1}{\T}\Bigr\rceil, \quad M = \Omega
\end{equation*}
and as reconstruction filter the cosine filter given by
\begin{equation*}
\Fourier_1 F_\Omega(\omega) = |\omega| \, \cos\Bigl(\frac{\pi \omega}{2\Omega}\Bigr) \, \ind_{[-\Omega,\Omega]}(\omega)
\quad \mbox{ for } \omega \in \R.
\end{equation*}
As predicted by our theory, the FBP reconstruction from conventional Radon data and the US-FBP reconstruction from Modulo Radon Projections yield essentially the same root mean square error (RMSE) and the reconstructions are also basically visually indistinguishable.
This is the case  for both $\lambda = 0.025$, corresponding to a $10$-times compression in dynamic range, and for $\lambda = 0.00025$, yielding a compression factor of $1000$, so the reconstruction error does not seem to be affected by the value of~$\lambda$. This behaviour is expected as our error bounds in Theorem~\ref{theorem:US_FBP_error} are independent of $\lambda$.
 We nevertheless find it remarkable given that the sinusoidal structures that  naturally arise in the Radon domain  -- and which are clearly visible Figures~\ref{fig:US-FBP_Shepp-Logan} (a) and (b) -- are no longer visible in Figure~\ref{fig:US-FBP_Shepp-Logan} (c).
For $\lambda = 0.025$ we find $K' = 1631 = K$, so that no additional samples are needed for the application of Algorithm~\ref{algo:UL_lambda}. For $\lambda = 0.00025$ we obtain $K' = 3793$, which corresponds to $2162$ additional samples per angle. In contrast to this, for Algorithm~\ref{algo:US_FBP} we would need $J = 13320$ and $N = 12$, which corresponds to $10071$ additional samples per angle -- nearly $5$-times the amount.

\begin{figure}[!t]
\centering
\includegraphics[width = 1\textwidth]{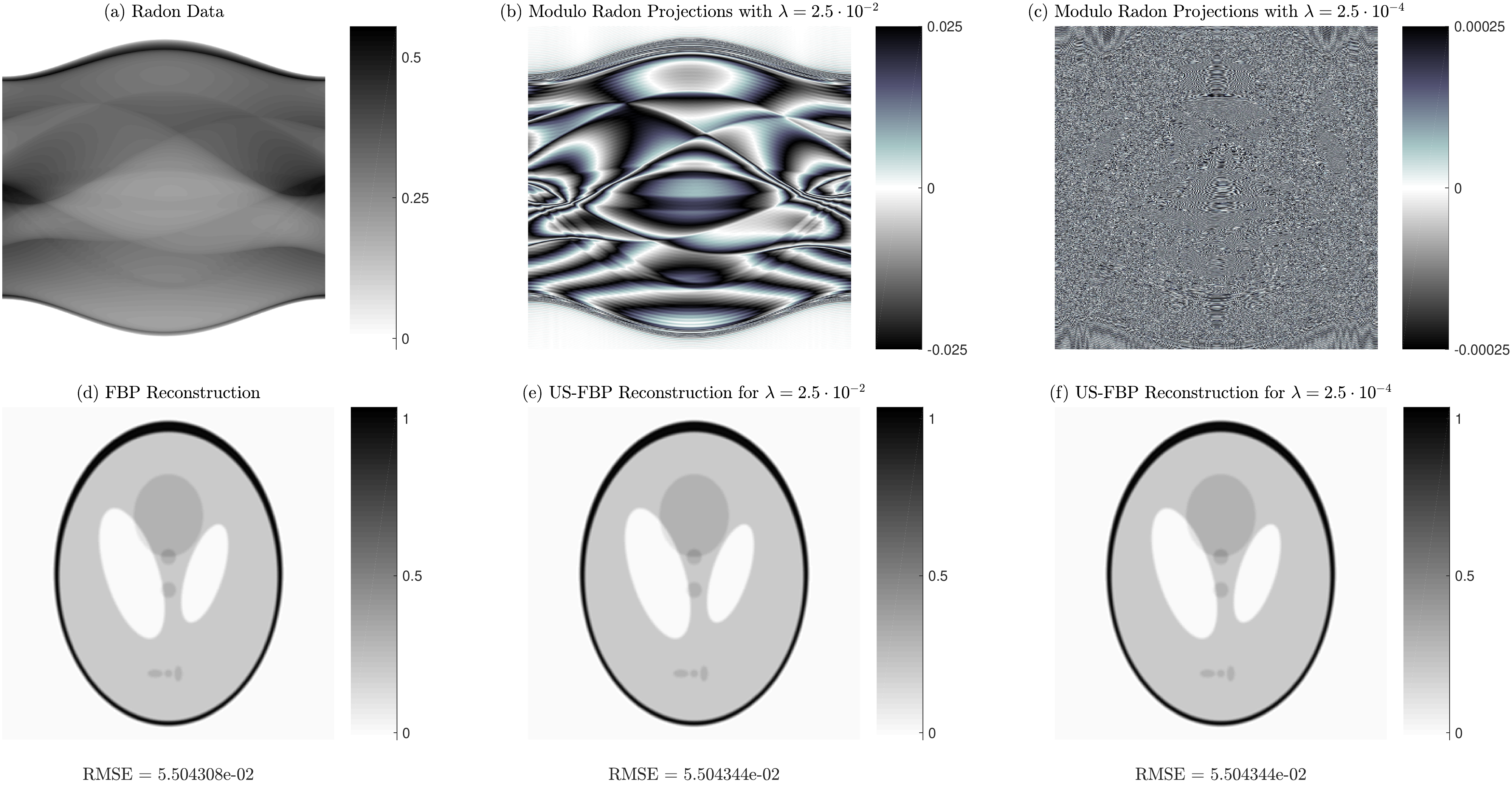}
\caption{Demonstration of US-FBP reconstruction for the Shepp-Logan phantom. (a)~True Radon data. (b)~MRT projections with $\lambda = 0.025$. (c)~MRT projections with $\lambda = 0.00025$. (d)~FBP on Radon data in (a). (e)~US-FBP on MRT data in (b). (f)~US-FBP on MRT data in (c).}
\label{fig:US-FBP_Shepp-Logan}
\end{figure}

\subsubsection*{Walnut Data} 

In a second set of experiments, we consider real Radon data that is affected by numerically applying an anti-aliasing filter followed by a modulo operation.
Our findings confirm that our approach is robust to  realistic uncertainties added in the sampling pipeline due to a practical implementation.
To this end, we consider the Walnut dataset from~\cite{Siltanen2015}, which is transformed to parallel beam geometry with $M = 600$ and $K = 1128$ corresponding to $\T = \nicefrac{1}{1128}$.
Moreover, the Radon data is normalized to the dynamical range $[0,1]$ so that $\|\Radon f\|_\infty = 1$, see Figure~\ref{fig:US-FBP_walnut}(a).
Its (simulated) Modulo Projections are displayed in Figure~\ref{fig:US-FBP_walnut}(b) for $\lambda = 0.025$ and in Figure~\ref{fig:US-FBP_walnut}(c) for $\lambda = 0.00025$.
In both cases we use $\Omega = 300$ so that $\T < \nicefrac{1}{\Omega\e}$ is fulfilled.

The reconstructions with our proposed US-FBP method are shown in Figures~\ref{fig:US-FBP_walnut}(e) and~\ref{fig:US-FBP_walnut}(f), where we again use the cosine reconstruction filter.
In both cases we observe that our algorithm yields a reconstruction of the walnut that is again visually indistinguishable  from the FBP reconstruction from conventional Radon data, cf.~Figure~\ref{fig:US-FBP_walnut}(d), while compressing the dynamic range by about $10$ and $1000$ times, respectively.

\begin{figure}[!t]
\centering
\includegraphics[width = 1\textwidth]{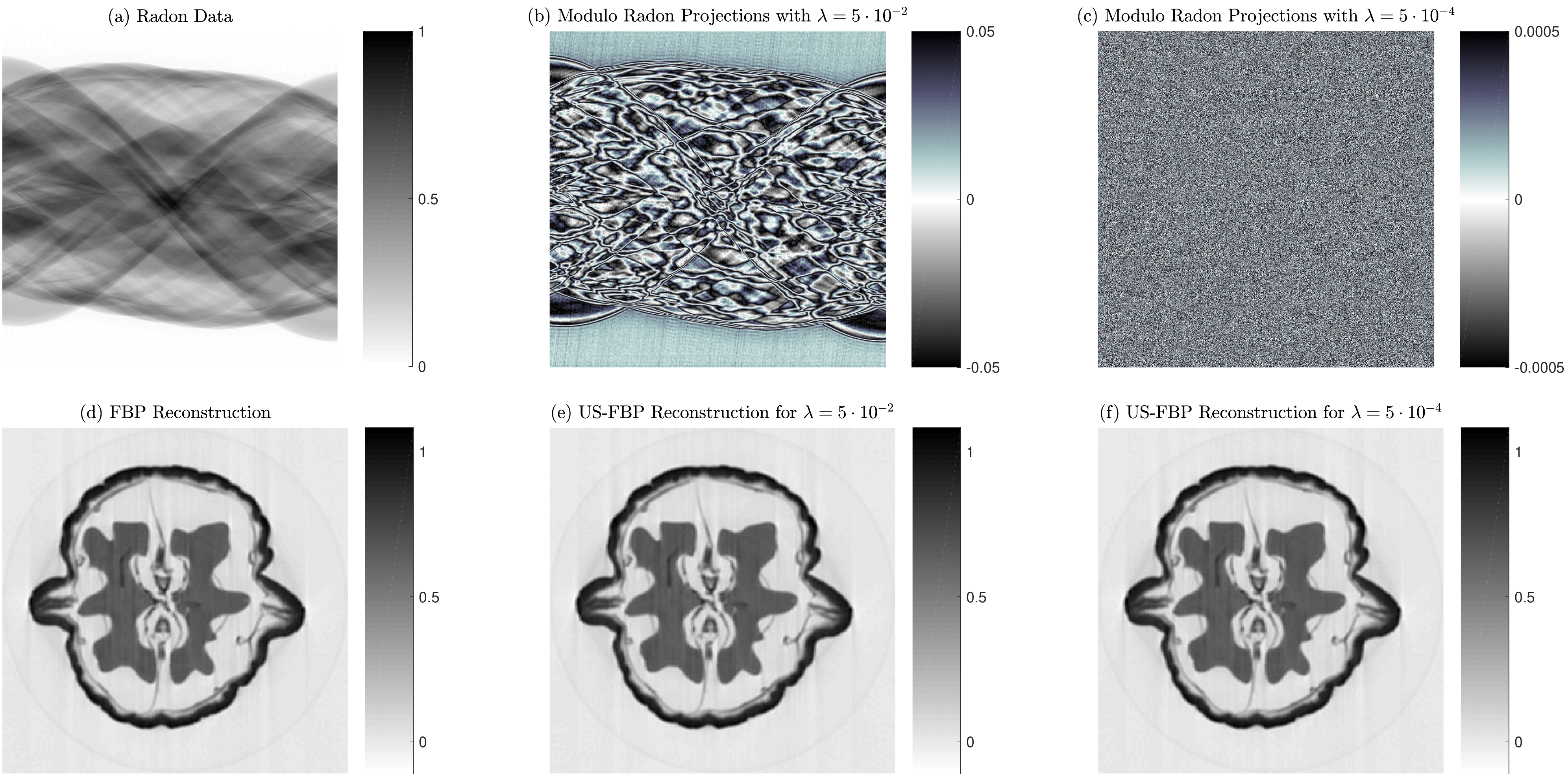}
\caption{Demonstration of US-FBP reconstruction for the Walnut dataset. (a)~Normalized Radon data. (b)~MRT projections with $\lambda = 0.05$. (c)~MRT projections with $\lambda = 0.0005$. (d)~FBP on Radon data in (a). (e)~US-FBP on MRT data in (b). (f)~US-FBP on MRT data in (c).}
\label{fig:US-FBP_walnut}
\end{figure}

\subsection{Hardware Experiment with Prototype ADC}

To further evaluate the feasibility of our work in a realistic setting, we consider the Walnut dataset from~\cite{Siltanen2015} from the previous subsection and physically digitize it using our prototype modulo ADC that converts a continuous function into modulo samples. To set up our experiment, we start with the Walnut Radon Transform projections $p_\theta = \Radon_\theta f$ at a given angle $\theta$ and physically interpolate these measurements using an electronic function generator to obtain a continuous-time analog signal $\rho_\theta$. We then shift and rescale the signal $\rho_\theta$ to yield a minimal amplitude of $0$ and a maximal amplitude of $20$ physical units (Volts), 
as the maximal amplitude of the walnut data set is close to the modulo threshold of our implementation, so without rescaling hardly any folds would happen.
We then digitize $\rho_\theta\in [0, 20]$ using two parallel sampling pipelines. 
\begin{enumerate}
  \item We sample $\rho_\theta(t)$ using a conventional ADC giving us $\rho_\theta(m\T)$. Figures~\ref{fig:usadc}-(i,a),(ii,a) and (iii,a), corresponding to $3$ different choices of projection angles $\theta$, illustrate that these samples are a good approximation of the original Walnut Radon Transform projections, up to a mean squared error of $10^{-2}$. As this small error, however, arises in the digital-to-analog conversion, hence outside of the method studied in this paper, we will use the samples of $\rho_\theta$ as our ground truth so that we can distinguish that error from the US-FBP reconstruction error.

  \item With $\lambda = 2.01$, we also sample the signal $\rho_\theta(t)$ using our prototype modulo ADC such that the modulo samples $\rho^\lambda_\theta(m\T)$ are in the range $[0,4.5]$. The MRT samples are shown in Figure~\ref{fig:usadc}-(i,b),(ii,b) and (iii,b), respectively. 
\end{enumerate}

For each of the $3$ cases shown in Figure~\ref{fig:usadc}, starting with $K = 665$ modulo samples acquired with sampling rate $\T = 75~\mu\mathrm{s}$, the signal's effective bandwidth is observed to be approximately $1000$ Hz. Note that this band-limitation is not due to the ADC that we apply when sampling $\rho_\theta$, but already present in the original data set \cite{Siltanen2015}. Using our recovery method, we obtain an HDR reconstruction with mean squared error of at most $10^{-3}$ (in each case). 

To further test our approach, we downsample the signal in Figure~\ref{fig:usadc}-(iii) by a factor of $2$ giving us $K = 333$ samples and $T = 150~\mu\mathrm{s}$. Note that due to the downsampling, reconstruction with $N=1$ fails because \eqref{eq:ybound} does not hold. To cope up with this issue, we reapply our recovery algorithm with a finite-difference order of $N=2$ and obtain near-perfect reconstruction. The mean squared error is observed to be $1.12\times10^{-3}$. The numerical metrics based on the hardware experiments, for both the non-subsampled case as well as the downsampled case, demonstrate the applicability of our approach in a real-world context. 

\begin{figure}[!t]
\centering
\includegraphics[width = 0.975\textwidth]{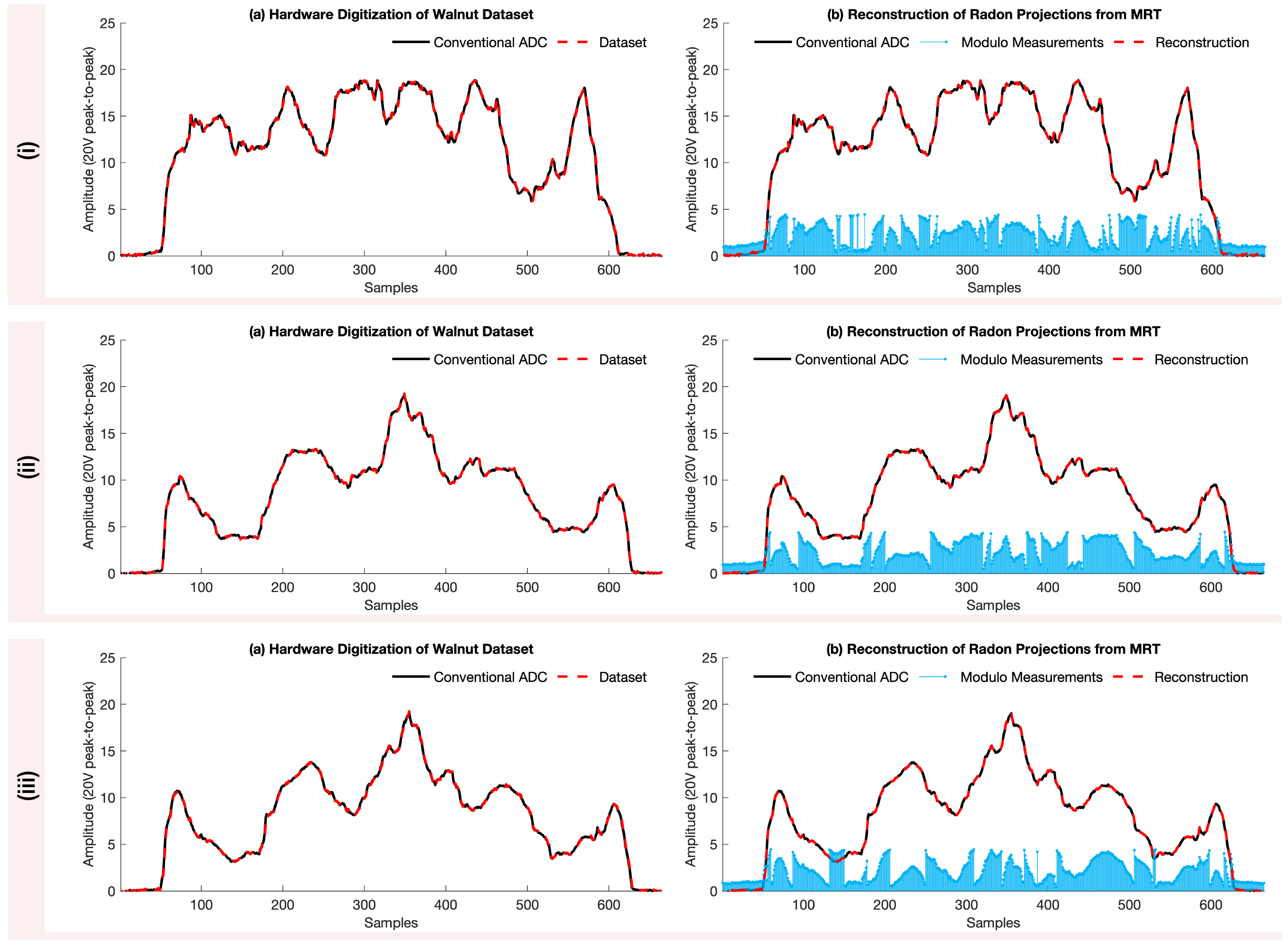}
\caption{Proof-of-concept reconstruction with our prototype modulo sampling hardware. (a) Walnut data set together with conventional, digital measurements or samples. (b) Modulo samples and reconstruction using our algorithm. In each case, our reconstruction matches the conventional digital measurements in (a) upto a mean squared error of $10^{-3}$.}
\label{fig:usadc}
\end{figure}

\subsection{Exploring the Tightness of Sampling Guarantees}

According to~\cite[Section 5.1.1]{Natterer2001}, the optimal radial sampling condition for conventional Radon data is given by
\begin{equation*}
\T \leq \frac{\pi}{\Omega} = \T_{\mathrm{Shannon}},
\end{equation*}
where the latter corresponds to the Nyquist rate for the Radon projection $p_\theta \in \PW_\Omega$ in~\eqref{eq:Radon_projection}.
However, to theoretically guarantee that Algorithm~\ref{algo:UL_lambda} recovers the Radon Projections
\begin{equation*}
\bigl\{p_{\theta_m}(k\T) \bigm| -K \leq k \leq K, ~ 0 \leq m \leq M-1\bigr\}
\end{equation*}
from given Modulo Radon Projections
\begin{equation*}
\bigl\{p_{\theta_m}^\lambda(k\T) \bigm| -K' \leq k \leq K, ~ 0 \leq m \leq M-1\bigr\},
\end{equation*}
Theorem~\ref{theo:UL_lambda} requires that the sampling rate must be a factor of $\pi\e$ faster than the Nyquist rate.
We now set up a demonstration that shows that Algorithm~\ref{algo:UL_lambda} succeeds even when the sampling rate is much slower than what is prescribed by Theorem~\ref{theo:UL_lambda}. 

\begin{figure}[t]
\centering
\includegraphics[width = 1\textwidth]{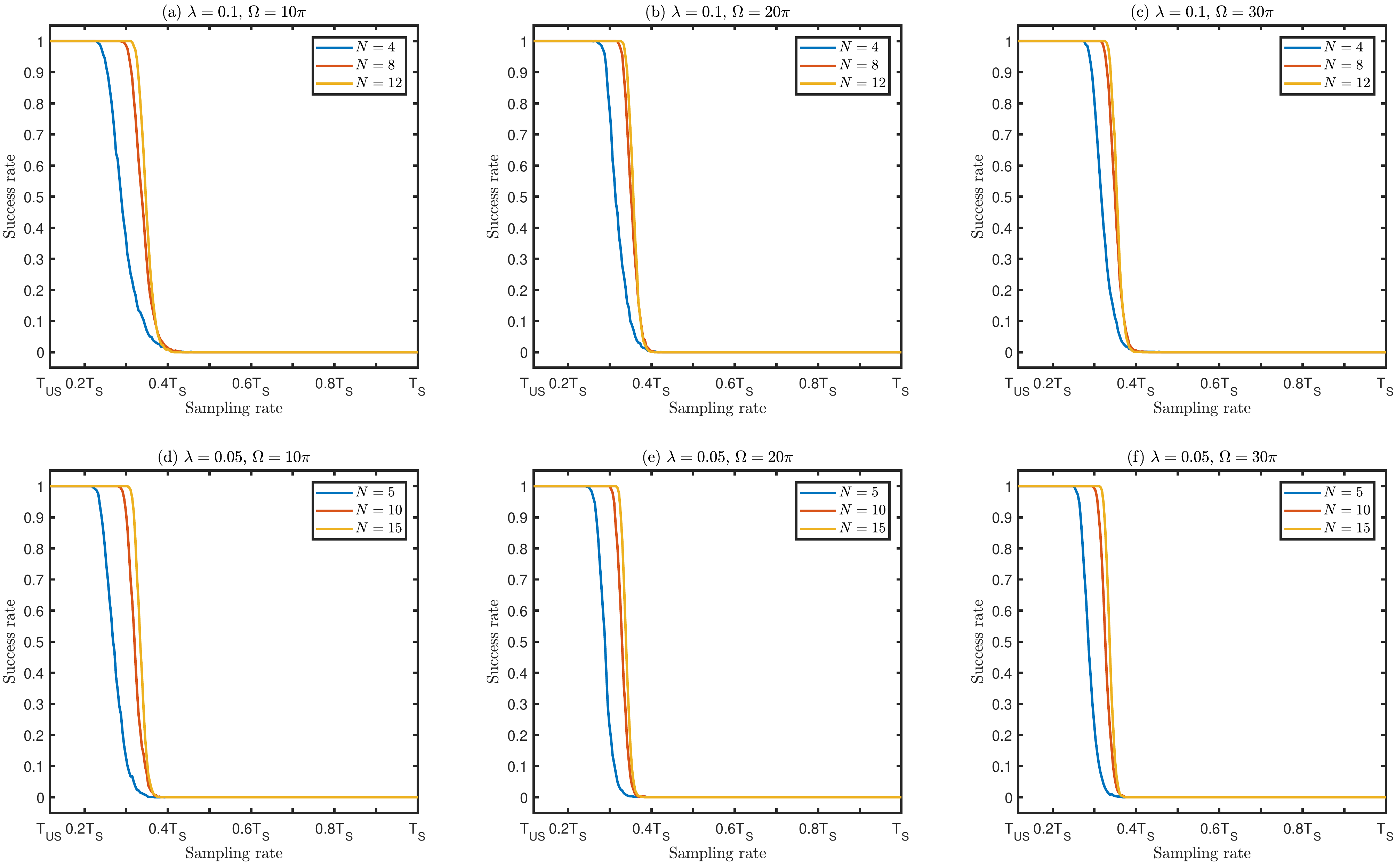}
\caption{Assessment of the tightness of the sampling guaranty $\T < \T_{\mathrm{US}}$.}
\label{fig:US_lambda_success}
\end{figure}

For $\lambda \in \{0.1,0.05\}$ and $\Omega \in \{10\pi,20\pi,30\pi\}$, we generate $1000$ realizations of a function $g \in \Band_\Omega$ with compact $\lambda$-exceedance by convolving a piecewise constant function taking values in $[-1,1]$ chosen uniformly at random with the ideal low-pass filter $\Phi_\Omega \in \PW_\Omega$ satisfying~\eqref{eq:pre_filter}.
Each realization is then sampled with sampling rate $\T = \T_{\mathrm{US}},\ldots,\T_{\mathrm{Shannon}}$ in $100$ steps and we run Algorithm~\ref{algo:UL_lambda} with
\begin{equation*}
N \in \biggl\{j \, \Bigl\lceil\frac{\log(\lambda)}{\log(0.5\T_{\mathrm{US}}\Omega\e)}\Bigr\rceil \biggm| j=1,2,3\biggr\}.
\end{equation*}

For each combination of $\T$ and $N$ we compute the {\em success rate} which is defined as the fraction of trials in which the algorithm reconstructs the samples $g(k\T)$ from its modulo samples~$\Mod_\lambda(g(k\T))$.
The results are summarized in Figure~\ref{fig:US_lambda_success}.
We observe that indeed the recovery is possible even when $\T \geq \T_{\mathrm{US}}$.
Moreover, a smaller threshold $\lambda$ requires a slightly faster sampling rate  below which the algorithm always succeeds, whereas a larger bandwidth $\Omega$ allows for a slower sampling rate.
On top of that, higher order differences also allow for smaller oversampling factors up to approximately $\T = 0.3 \cdot \T_{\mathrm{Shannon}}$, but there seems to be a minimum sampling rate around $\T = 0.4 \cdot \T_{\mathrm{Shannon}}$ above which the algorithm always fails.


\section{Conclusions and Looking Ahead}

The key finding of this paper is a single-shot solution to the problem of high dynamic range (HDR) tomography that is based on a co-design approach of novel hardware and mathematically guaranteed recovery algorithms. To this end, we rigorously introduce the Modulo Radon Transform and study its properties and inversion. On the hardware front, we conceptualize a detector that records the modulo of the Radon Transform projections. The implication is that HDR projections, that may be arbitrarily larger than the detector's recordable range, are folded back and registered as low dynamic range measurements. This avoids saturation or clipping problems. On the algorithmic front, we study different scenarios for recovery of the image or object from modulo-folded projections. Starting with band-limited functions, we study injectivity conditions for identifiability of the modulo projections. This also results in a recovery guarantee specified in terms of a sampling density criterion. Thereon, taking a step closer to practice, we study the case of approximately compactly supported functions. To analyze the recovery in this case, we introduce the $\lambda$-exceedance property. The upshot of this approach is a significant reduction in additionally needed samples.

Beyond the theoretical advances, as a proof-of-concept, we also demonstrate that our approach can be brought closer to practice. To do so, we use an open source data set comprising of Radon Transform projections that are folded in hardware using our custom designed modulo sampling prototype \cite{Bhandari2021J}. This sequential hardware setup gives a realistic sense of uncertainties and noise that one may encounter in practice. While our theory does not yet provide recovery guarantees for specific noise models, the inherent stability of our algorithms allows us to reconstruct HDR data with high fidelity. In this sense, our all-hardware demo establishes the practicability of our approach.  

\subsection*{New Avenues in Interdisciplinary Areas}
Since the Modulo Radon Transform harnesses a joint design between hardware and algorithms, it is natural that our work has interesting implications in interdisciplinary areas. We enlist a few interesting research directions. 

\subsubsection*{Theoretical Considerations} 
\begin{itemize}
  \item Noise and Perturbations. Our current work is based on the inversion of finite-difference operators, and the stability arises from the fact that the modulo operation defines a natural discrete grid. At the same time, for small modulo thresholds, this advantage becomes less pronounced, and additional measures to increase the noise robustness should be considered. In particular, since the end result of our imaging pipeline is a digital signal, the effects of bounded noise, arising from quantization, remains to be explored. 
  \item Implicit Trade-offs.  Since the HDR capability of our current algorithm is intricately tied to the order of difference operators and the sampling rate, a natural next step is to understand the implicit trade-offs in the recovery procedure. We remind the reader that our simulations show that the sampling bound prescribed in our work can be tightened. This remains an interesting research pursuit because, for fixed difference order, a tighter bound will allow us to operate with higher dynamic signals.
  \item Sample Sizes. We have observed that the $\lambda$-exceedance property results in lower sample sizes. We believe that the exact relation between bandwidth and $\lambda$-exceedance will pave a path for designing practical systems, giving precise guidelines about realistic sample sizes.
\end{itemize}

\subsubsection*{Practical Considerations} 

\begin{itemize}
  \item Signal Spaces. In addition to the band-limited model arising as a consequence of the inherent smoothing, we hope to extend our results to refined models that can additionally capture non-band-limited features that naturally arise in tomographic images. 

  \item Algorithms. For practical implementation, due to huge data sizes, fast and efficient recovery algorithms are highly desirable. Our current algorithms are sequential and implement an ``unfold then reconstruct'' approach. This sequential aspect of reconstruction may slow down the recovery procedure in practice. A natural next goal is to develop a fast and robust, single-shot, recovery approach.
  \item Imaging Geometry. As an inaugural example, in this work, we considered parallel beam imaging geometry. To increase the utility of our approach, we plan to extend our work to fan-beam and cone-beam geometries. 
\end{itemize}


\bibliographystyle{siamplain}

\end{document}